\theoremstyle{plain}
\newtheorem{theorem}{Theorem}
\newtheorem{definition}[theorem]{Definition} 
\newtheorem{lemma}[theorem]{Lemma}
\newtheorem{claim}[theorem]{Claim}
\newtheorem*{remark*}{Remark}
\newenvironment{manualtheorem}[1]{%
  \manualtheoreminner
}{\endmanualtheoreminner}
\title{On Two-Pass Streaming Algorithms for Maximum Bipartite Matching}
\author{Christian Konrad}
\author{Kheeran K. Naidu}
\affil{Department of Computer Science, University of Bristol, UK \\ \texttt{\{christian.konrad,kheeran.naidu\}@bristol.ac.uk}}
\date{}
\begin{document}
\maketitle
\thispagestyle{empty}
\begin{abstract}
We study two-pass streaming algorithms for \textsf{Maximum Bipartite Matching} (\textsf{MBM}). All known two-pass streaming algorithms for \textsf{MBM} operate in a similar fashion: They compute a maximal matching in the first pass and find 3-augmenting paths in the second in order to augment the matching found in the first pass. Our aim is to explore the limitations of this approach and to determine whether current techniques can be used to further improve the state-of-the-art algorithms. We give the following results:

We show that every two-pass streaming algorithm that solely computes a maximal matching in the first pass and outputs a $(2/3+\epsilon)$-approximation requires $n^{1+\Omega(\frac{1}{\log \log n})}$ space, for every $\epsilon > 0$, where $n$ is the number of vertices of the input graph. This result is obtained by extending the Ruzsa-Szemer\'{e}di graph construction of [Goel et al., SODA'12] so as to ensure that the resulting graph has a close to perfect matching, the key property needed in our construction. This result may be of independent interest.

Furthermore, we combine the two main techniques, i.e., subsampling followed by the \textsc{Greedy} matching algorithm [Konrad, MFCS'18] which gives a $2-\sqrt{2} \approx 0.5857$-approximation, and the computation of \emph{degree-bounded semi-matchings} [Esfandiari et al., ICDMW'16][Kale and Tirodkar, APPROX'17] which gives a $\frac{1}{2} + \frac{1}{12} \approx 0.5833$-approximation, and obtain a meta-algorithm that yields Konrad's and Esfandiari et al.'s algorithms as special cases. This unifies two strands of research. By optimizing parameters, we discover that Konrad's algorithm is optimal for the implied class of algorithms and, perhaps surprisingly, that there is a second optimal algorithm. We show that the analysis of our meta-algorithm is best possible.
Our results imply that further improvements, if possible, require new techniques.
\end{abstract}

\pagebreak

\section{Introduction} \label{sec:intro}  
In the \emph{semi-streaming model} for processing large graphs, an $n$-vertex graph is presented to an algorithm as a sequence of its edges in arbitrary order. The algorithm makes one or few passes over the input stream and maintains a memory of size $O(n \mathop{\mathrm{polylog}} n)$. 
  
The semi-streaming model has been extensively studied since its introduction by Feigenbaum et al. in 2004 \cite{fkmsz04}, and various graph problems, including matchings, independent sets, spanning trees, graph sparsification, subgraph detection, and others are known to admit semi-streaming algorithms (see \cite{m14} for an excellent survey). Among these problems, the \textsf{Maximum Matching} problem and, in particular, its bipartite version, the \textsf{Maximum Bipartite Matching} (\textsf{MBM}) problem, have received the most attention (see, for example, \cite{fkmsz04,m05,ag11,kmm12,gkk12, kapralov2013better,ehm16, kt17,k18,gkms19,b20,fhmrr20,ab21,alt21,k21}).

In this paper, we focus on \textsf{MBM}. The currently best one-pass semi-streaming algorithm for \textsf{MBM} is the \textsc{Greedy} matching algorithm (depicted in Algorithm~\ref{alg:greedy}). \textsc{Greedy} processes the edges of a graph in arbitrary order and inserts the current edge into an initially empty matching if possible. It produces a maximal matching, which is known to be at least half the size of a maximum matching, and constitutes a $\frac{1}{2}$-approximation semi-streaming algorithm for \textsf{MBM}. It is a long-standing open question whether \textsc{Greedy} is optimal for the class of semi-streaming algorithms or whether an improved approximation ratio is possible. Progress has been made on the lower bound side (\cite{gkk12,kapralov2013better,k21}), ruling out semi-streaming algorithms with approximation ratio better than $\frac{1 }{1 + \ln 2} \approx 0.5906$ \cite{k21}. 

Konrad et al. \cite{kmm12} were the first to show that an approximation ratio better than $\frac{1}{2}$ can be achieved if two passes over the input are allowed, and further successive improvements \cite{kt17,ehm16,k18} led to a two-pass semi-streaming algorithm with an approximation factor of $2 - \sqrt{2} \approx 0.58578$ \cite{k18} (see Table~\ref{tab:overview} for an overview of two-pass algorithms for \textsf{MBM}).

\begin{table}[ht]
\centering
 \begin{tabular}{|rlc|}
  \hline   
  Approximation Factor & Reference & Comment \\
  \hline 
  $\frac{1}{2}+0.019$ & Konrad et al. \cite{kmm12} & randomized \\
  $\frac{1}{2} + \frac{1}{16} = 0.5625$ & Kale and Tirodkar \cite{kt17} & deterministic \\
  $\frac{1}{2} + \frac{1}{12} \approx 0.5833$ & Esfandiari et al. \cite{ehm16} & deterministic \\
  $2 - \sqrt{2} \approx 0.5857$ & Konrad \cite{k18} & randomized \\
  \hline
 \end{tabular}
\caption{Two-pass semi-streaming algorithms for \textsf{Maximum Bipartite Matching}. \label{tab:overview}} 
\end{table}

All known two-pass streaming algorithms proceed in a similar fashion. In the first pass, they run \textsc{Greedy} in order to compute a maximal matching $M$. In the second pass, they pursue different strategies to compute additional edges $F$ that allow them to increase the size of $M$. Two techniques for computing the edge set $F$ have been used:

\begin{figure}[t]
\centering
\begin{minipage}{0.48\textwidth}
 \begin{algorithm}[H]
 \textbf{Input: } Graph $G=(A, B, E)$
  \begin{algorithmic}[1]
\State $M \gets \varnothing$
\State \textbf{for each} edge $e \in E$ (arbitrary order)
\State $\quad$ \textbf{if} $M \cup \{e \}$ is a matching
\State $\quad$ $\quad$ $M \gets M \cup \{ e \}$
\State \textbf{return} $M$
  \end{algorithmic}
  \caption{\textsc{Greedy} Matching. \label{alg:greedy}}
 \end{algorithm}
\end{minipage} \hfill 
\begin{minipage}{0.48\textwidth}
 \begin{algorithm}[H]
 \textbf{Input: } Graph $G=(A, B, E)$, integer $d$
  \begin{algorithmic}[1]
\State $S \gets \varnothing$
\State \textbf{for each} edge $ab \in E$ (arbitrary order)
\State $\quad$ \textbf{if} $\deg_S(a) = 0$ \textbf{and} $\deg_S(b) < d$ 
\State $\quad$ $\quad$ $S \gets S \cup \{ ab \}$
\State \textbf{return} $S$
  \end{algorithmic}
  \caption{$\textsc{Greedy}_d$ Semi-Matching. \label{alg:greedyd}}
 \end{algorithm}
\end{minipage}
\end{figure}

\begin{enumerate}
 \item \textbf{Subsampling and \textsc{Greedy}} \cite{k18} (see also \cite{kmm12}): Given a bipartite graph $G = (A, B, E)$ and a first-pass maximal matching $M$, they first subsample the edges $M$ with probability $p$ and obtain a matching $M' \subseteq M$. Then, in the second pass, they compute \textsc{Greedy} matchings $M_L$ and $M_R$ on subgraphs $G_L = G[A(M') \cup \overline{B(M)}]$ and $G_R = G[\overline{A(M)} \cup B(M')]$, respectively, where $A(M')$ are the matched $A$-vertices in $M'$, $\overline{B(M)}$ are the unmatched $B$ vertices, and $B(M')$ and $\overline{A(M)}$ are defined similarly. It can be seen that if  $M$ is relatively small, then $M' \cup M_L \cup M_R$ contains many disjoint $3$-augmenting paths. Setting $p=\sqrt{2}-1$ yields the approximation factor $2-\sqrt{2}$.
 
 \item \textbf{Semi-matchings and $\textsc{Greedy}_d$} \cite{kt17,ehm16}: Given a bipartite graph $G = (A, B, E)$ and a first-pass maximal matching $M$, the second pass consists of finding \textit{degree-$d$-constrained semi-matchings} $S_L$ and $S_R$ on subgraphs $G_L = G[A(M) \cup \overline{B(M)}]$ and $G_R = G[\overline{A(M)} \cup B(M)]$, respectively, using the algorithm $\textsc{Greedy}_d$ (as depicted in Algorithm~\ref{alg:greedyd}). A degree-$d$-constrained semi-matching in a bipartite graph is a subset of edges $S \subseteq E$ such that $\deg_S(a) \le 1$ and $\deg_S(b) \le d$, for every $a \in A$ and $b \in B$ or vice versa\footnote{The usual definition of a semi-matching requires $\deg_S(a) = 1$, for every $a \in A$ (e.g. \cite{fln14,kr16}). This property is not required here, and, for ease of notation, we stick to this term.}. Similar to the method above, it can be seen that if the matching $M$ is relatively small, $M \cup S_L \cup S_R$ contains many disjoint $3$-augmenting paths. The setting $d=3$ yields the approximation factor $\frac{1}{2} + \frac{1}{12}$. 
\end{enumerate}

\subparagraph*{Our Results.} 
In this paper, we explore the limitations of this approach and investigate whether current techniques can be used to further improve the state-of-the-art.

Our first result is a limitation result on the approximation factor achievable by algorithms that follow the scheme described above:

\begin{theorem}[simplified] \label{thm:lb} 
 Every two-pass semi-streaming algorithm for \textsf{MBM} that solely runs \textsc{Greedy} in the first pass has an approximation factor of at most $\frac{2}{3}$.
\end{theorem}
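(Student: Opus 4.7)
The plan is to prove the theorem via an adversarial construction based on Ruzsa--Szemer\'{e}di (RS) graphs, together with a communication complexity reduction that yields the space lower bound implicit in the simplified statement. The key feature of the restricted class of algorithms is that the only information carried from the first pass into the second is a maximal matching $M$; bounding the approximation factor then reduces to bounding how much $M$ can be improved by a semi-streaming second pass.

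Step 1 (modified RS graph). I would start from the $(r,t)$-RS graph of Goel, Kapralov, and Khanna, which decomposes into $r = n^{\Omega(1/\log\log n)}$ induced matchings, each of size $t = \Omega(n/\log\log n)$. The first task, flagged by the abstract as being of independent interest, is to extend this construction so that the graph also contains a matching $M^{\star}$ of size $(1-o(1))\,n/2$, edge-disjoint from the induced matching decomposition. I would do this by grafting additional vertices and a dedicated near-perfect matching $M^{\star}$ onto the RS graph in a way that preserves inducedness of each $M_i$ in the augmented graph; a naive overlay would destroy either inducedness or the size parameters, so the embedding must be edge-disjoint and locally non-interfering.

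Step 2 (adversarial stream). I would specify an edge ordering so that, when \textsc{Greedy} processes the stream, it is forced to commit to a maximal matching $M$ consisting of edges from $M^{\star}$ together with extra obstructive edges coming from one distinguished induced matching $M_{i^\star}$. Tuning the parameters, I can arrange $|M|\approx \tfrac{2}{3}\,|M^{\star}|$, while $|M^{\star}|$ remains the optimum. Critically, the identity of $i^\star$ is not determined by $M$ alone but depends on the full stream.

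Step 3 (second-pass lower bound via communication). Given $M$, pushing the matching size above $\bigl(\tfrac{2}{3}+\epsilon\bigr)|M^{\star}|$ requires finding many vertex-disjoint $3$-augmenting paths, which in this construction forces the algorithm to identify $i^\star$ (or a large fraction of its edges). I would encode this as a two-party communication game between Alice (holding the first-pass prefix that determines $M$) and Bob (holding the remaining RS structure), and reduce from a suitable hard problem such as Boolean Hidden Matching or Set Disjointness. A semi-streaming algorithm using $s$ bits of memory induces an $O(s)$-bit protocol, and the standard RS-based communication lower bound then gives $s = n^{1+\Omega(1/\log\log n)}$, which sharpens the simplified $2/3$ statement.

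The hard part will be Step 1: modifying the Goel--Kapralov--Khanna construction to simultaneously retain the induced matching decomposition with its quantitative parameters \emph{and} guarantee a near-perfect matching. Once this modified RS graph is in hand, Steps 2 and 3 follow a standard streaming-lower-bound blueprint, although care is required to route the communication reduction through the specific constraint that the first pass is exactly \textsc{Greedy} (so the reduction must be robust to any execution of \textsc{Greedy} consistent with the stream, not just a chosen one).
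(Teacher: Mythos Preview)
Your proposal has the right high-level shape (RS graphs + near-perfect matching + communication lower bound), but two structural points are inverted relative to what actually makes the argument go through, and a third step is misdirected.

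\textbf{Role of the near-perfect matching.} You treat $M^\star$ as the optimum and arrange for \textsc{Greedy} to pick up only a $\tfrac{2}{3}$-fraction of it. In the paper's construction the opposite happens: the near-perfect matching $P$ (your $M^\star$) is placed at the front of the stream so that \textsc{Greedy} returns \emph{exactly} $P$. The gap to optimum comes not from \textsc{Greedy} missing part of $P$, but from the extra vertex sets $X,Y$ and the perfect matchings $M_X^*,M_Y^*$ in Goel et al.'s distribution $\lambda$, which push $\mu(G^+)\ge \tfrac{3}{2}N$ while $|P|=N$. So the $\tfrac{2}{3}$ ratio arises because the optimum is $\tfrac{3}{2}$ times $|P|$, not because $|M|\approx\tfrac{2}{3}|M^\star|$.

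\textbf{How the near-perfect matching is built.} You ask that $M^\star$ be edge-disjoint from the induced-matching decomposition and propose grafting new vertices. The paper does neither. For each induced matching $M_I$ (blue-$X$ to red-$Y$), it adds a symmetric matching $M_I'$ (blue-$Y$ to red-$X$) on the \emph{same} vertex set; the crucial work is Lemma~5, showing these new $M_I'$ remain induced against all $M_J$. The near-perfect matching is then simply $M_i\cup M_i'$ for one fixed $i$, i.e., it is assembled \emph{from} the decomposition, not disjoint from it. Requiring edge-disjointness would make Step~1 harder than necessary and would break the reduction in Step~3, since the whole point is that adding $P$ to Alice and Bob's inputs only adds edges that already respect the RS structure.

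\textbf{The reduction.} You suggest reducing from Boolean Hidden Matching or Set Disjointness. The paper instead reduces directly to Goel et al.'s one-pass lower bound (their Theorem~\ref{thm:goel-hardness}): given a protocol $\gamma^+$ for the augmented distribution $\lambda^+$, Alice and Bob each locally add the public matching $P$ to their $\lambda$-inputs, simulate $\gamma^+$, and then Bob swaps out any output edges in $P$ for incident edges in $M_X^*\cup M_Y^*$, losing at most $|F|=2\delta N$ edges. This gives a $(\tfrac{2}{3}+\epsilon')$-protocol for $\lambda$ with the same message, invoking the existing $n^{1+\Omega(1/\log\log n)}$ bound. No new communication problem is needed, and the split is not ``Alice holds the prefix that determines $M$''; both parties know $P$ from the start. (Minor: you also swap the RS parameters --- $t=N^{\Omega(1/\log\log N)}$ is the number of matchings and $r=(\tfrac{1}{2}-\delta)N$ their size.)
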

Our result builds upon a result by Goel et al. \cite{gkk12} who proved that the lower bound of Theorem~\ref{thm:lb} applies to one-pass streaming algorithms. Their construction relies on the existence of dense \emph{Ruzsa-Szemer\'{e}di} graphs with large induced matchings, i.e., bipartite $2n$-vertex graphs $G=(A, B, E)$ with $|A| = |B|= n$ whose edge sets can be partitioned into disjoint induced matchings such that each matching is of size at least $(\frac{1}{2}- \delta)n$, for some small $\delta$. Our construction requires similarly dense RS graphs with equally large matchings, however, in addition to these properties, our RS graphs must contain a \emph{near-perfect} matching, i.e., a matching that matches all but a small constant fraction of the vertices. To this end, we augment the RS graph construction by Goel et al.: We show that, for each induced matching $M$ in Goel et al.'s construction, we can add a matching $M'$ to the construction without violating the induced matching property such that $M \cup M'$ forms a near-perfect matching. We believe this result may be of independent interest.

Next, we combine the subsampling and semi-matching techniques and give a meta-algorithm that yields Konrad's and Esfandiari et al.'s algorithms as special cases, thereby unifying two strands of research. 
Our meta-algorithm is parameterised by a sampling probability $0 < p \le 1$ and an integral degree bound $d \ge 1$. First, as in the subsampling technique, the edges of the first-pass matching $M$ are subsampled independently with probability $p$, which yields a subset $M' \subseteq M$. Next, as in the semi-matching technique, incomplete semi-matchings $S_L$ and $S_R$ with degree bounds $d$ are computed, however, now in the subgraphs $G_L' = G[A(M') \cup \overline{B(M)}]$ and $G_R' = G[\overline{A(M)} \cup B(M')]$. The algorithm then outputs the largest matching among the edges $M \cup S_L \cup S_R$.

As our second result, we establish the approximation factor of our meta-algorithm:

\begin{theorem}[simplified]\label{thm:ub}
 Combining the subsampling and semi-matching techniques yields a two-pass semi-streaming algorithm for \textsf{MBM} with approximation factor 
 \begin{align*}
  \begin{cases}
   \frac{1}{2} + (\frac{1}{d+p} - \frac{1}{2d}) \cdot p, & \mbox{ if } p \le d(\sqrt{2}-1) \\
   \frac{1}{2} + \frac{d-p}{6d+2p}, & \ \mbox{ otherwise} \ ,
  \end{cases}
 \end{align*}
 (ignoring lower order terms) that succeeds with high probability.
\end{theorem}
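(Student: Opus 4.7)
The plan is to use the standard two-pass template: the algorithm's output has size at least $|M| + \nu$, where $\nu$ counts vertex-disjoint $3$-augmenting paths contained in $M \cup S_L \cup S_R$. Normalise $|M^{*}| = n$ and write $|M| = \beta n$, so $\beta \in [1/2, 1]$ by maximality of $M$. The approximation ratio is then $\min_{\beta \in [1/2,1]}\bigl(\beta + h(\beta, p, d)\bigr)$ for a yield function $h(\beta, p, d)$ that lower bounds $\nu/n$; the two pieces of the theorem will emerge from minimising this expression, with a regime switch at $p = d(\sqrt{2}-1)$.

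I would first count the structural budget of augmenting paths. Taking $M^{*}$ to be perfect (the worst case), maximality of $M$ forces each unmatched $A$-vertex $a' \in \overline{A(M)}$ to be the $M^{*}$-partner of the $B$-endpoint of a unique $M$-edge; hence $(1-\beta)n$ edges of $M$ admit a ``right'' $M^{*}$-augmentation and symmetrically $(1-\beta)n$ admit a ``left'' one. By inclusion--exclusion, at least $(2-3\beta)n$ edges $e = ab \in M$ admit both, with $b^{*}_{a} \in \overline{B(M)}$ and $a^{*}_{b} \in \overline{A(M)}$; call this set $M_{\text{both}}$. Conditional on $e \in M'$ (probability $p$), the candidate extensions $ab^{*}_{a}$ and $a^{*}_{b}b$ are fed to $\textsc{Greedy}_{d}$ on $G_L'$ and $G_R'$ respectively. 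The algorithm accepts such an edge unless the matched endpoint already has a semi-matching edge, or the unmatched endpoint has reached degree $d$; the latter failure happens only when up to $d$ other sampled $M$-edges have competed for the same unmatched vertex.

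The main technical step is to extract many vertex-disjoint $3$-augmenting paths from the resulting configuration. I would form an auxiliary bipartite (multi)graph between the sampled edges of $M_{\text{both}}$ and the unmatched vertices in $\overline{A(M)} \cup \overline{B(M)}$, of maximum degree $d$ on the unmatched side and maximum degree $2$ on the $M$-edge side, and lower-bound its matching number by a Hall-type or fractional argument to count realised augmentations. Two regimes then emerge. When $p \le d(\sqrt{2}-1)$, the degree cap on unmatched vertices is slack and the yield is bottlenecked by subsampling, producing the first expression of the theorem and recovering Konrad's analysis at $d = 1$. When $p > d(\sqrt{2}-1)$, the degree cap binds and the yield is controlled by the semi-matching side, producing the second expression and recovering the analysis of Esfandiari et al.\ at $p = 1,\,d = 3$. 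A Chernoff bound on the independent subsampling lifts the expected yield to the high-probability bound stated. The main obstacle is the combinatorics of competition at unmatched vertices: identifying the regime boundary $p = d(\sqrt{2}-1)$ and getting matching constants on both sides requires a careful piecewise bookkeeping rather than a single clean inequality.
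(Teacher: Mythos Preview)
Your outline has the right overall shape---augment $M$ by counting vertex-disjoint $3$-augmenting paths and then minimise over $\beta$---but it is missing the central technical ingredient, and two of the subsidiary steps are incorrect as stated.

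\textbf{The missing core lemma.} The heart of the proof is a lower bound on $|S_L|$ (and symmetrically $|S_R|$) of the form $|S_L| \ge \frac{d}{d+p}\cdot p\cdot \mu(G_L)$ in expectation. You do not prove anything of this sort; instead you pass directly from ``candidate extensions $ab^{*}_{a}$ are fed to $\textsc{Greedy}_d$'' to an auxiliary-graph argument as if the only way a candidate can fail is competition from other \emph{optimal} edges at the same unmatched vertex. But the stream is adversarial: $\textsc{Greedy}_d$ may pick an arbitrary non-$M^{*}$ edge $ab'$ first, which uses up $a$'s single slot and contributes nothing to your auxiliary graph. A purely static Hall/fractional argument on the $M^{*}$-edges cannot see this. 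The paper handles it by a ``damage game'' in which each selected edge deals a carefully weighted amount of damage to the incident $M^{*}$-edges, combined with Wald's equation and the principle of deferred decisions on whether the competing $M^{*}$-endpoint was sampled. This is the step that actually produces the $\frac{d}{d+p}$ factor; without it you cannot get the stated constants.

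\textbf{High probability.} A Chernoff bound does not suffice: $|S_L|$ is not a sum of independent indicators over the sampled edges, because the greedy choices are highly dependent. The paper obtains concentration by constructing the Doob martingale of $|\textsc{Greedy}_d(\pi_H)|$ with respect to the sequence of selected edges and applying Azuma--Hoeffding with unit Lipschitz increments.

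\textbf{The regime boundary.} The switch at $p = d(\sqrt{2}-1)$ is not a ``degree cap slack vs.\ binding'' phenomenon. Once you have $|\mathcal{Q}| \ge \bigl(\frac{1-2\epsilon}{d+p} - \frac{1+2\epsilon}{2d}\bigr)\cdot p\cdot \mu(G)$ with $|M| = (\tfrac{1}{2}+\epsilon)\mu(G)$, the total $|M|+|\mathcal{Q}|$ is affine in $\epsilon$; the threshold $p = d(\sqrt{2}-1)$ is precisely where the $\epsilon$-coefficient changes sign. Above the threshold the minimiser is the $\epsilon$ at which the $|\mathcal{Q}|$ bound vanishes, namely $\epsilon = \frac{d-p}{6d+2p}$, and the ratio is just $\tfrac{1}{2}+\epsilon$. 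Your informal description does not locate this correctly.

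Finally, the paper does not count ``both-augmentable'' edges at all. It bounds $|S_L|$ and $|S_R|$ separately via the lemma above (using $\mu(G_L)+\mu(G_R)\ge (1-2\epsilon)\mu(G)$), then combines them through $|\mathcal{P}|\ge |S_L|+|S_R|-|M'|$ and $|\mathcal{Q}|\ge |\mathcal{P}|/d$.
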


\begin{figure}[t]
\centering
\includegraphics[width=0.95\textwidth]{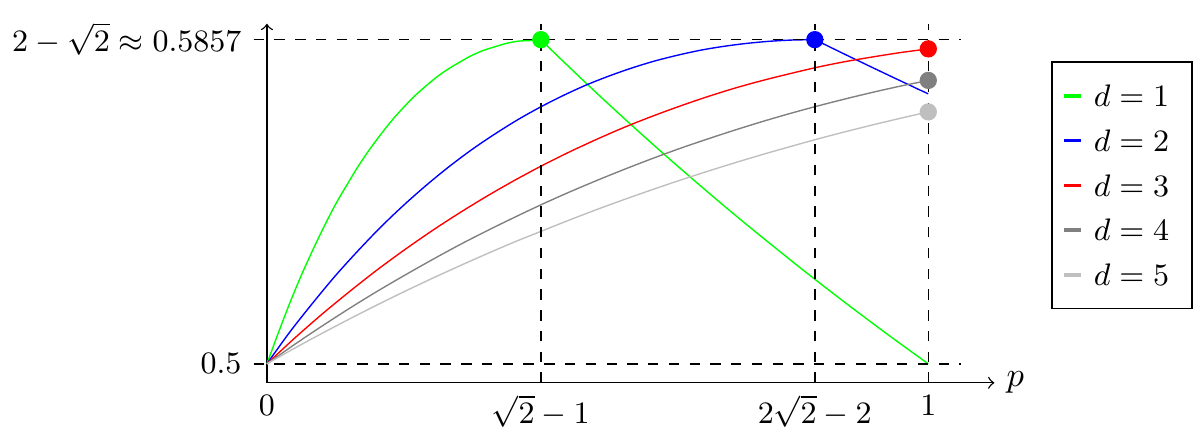}
\caption{Approximation factors for different settings of $d$.} \label{fig:meta_alg}
\end{figure}

Interestingly, two parameter settings maximize the approximation factor in Theorem~\ref{thm:ub}, achieving the ratio $2-\sqrt{2}$ (see Figure~\ref{fig:meta_alg}). This is achieved by setting $d=1$ and $p=\sqrt{2}-1$ which recovers Konrad's algorithm, and by setting $d=2$ and $p= 2\sqrt{2}-2$ which gives a new algorithm. The setting $d=3$ and $p=1$ yields the slightly weaker bound $\frac{1}{2} + \frac{1}{12} \approx 0.5833$ and recovers Esfandiari et al.'s algorithm. 

We also show that the analysis of our meta-algorithm is tight, by giving instances on which our meta-algorithm does not perform better than the claimed bound (\textbf{Theorem~\ref{thm:opt_analysis}}).   

\subparagraph*{Discussion.} 
Our results demonstrate that new techniques are needed in order to improve on the $(2-\sqrt{2})$ approximation factor.
However, one may wonder whether $2-\sqrt{2}$ is the best approximation ratio achievable by the class of two-pass matching algorithms that solely computes a maximal matching in the first pass. As pointed out by Kapralov \cite{k21}, his techniques for establishing the $\frac{1}{1+\ln 2}$ lower bound for one-pass algorithms can probably also be applied to a construction by Huang et al. \cite{hpttwz19}, which would then show that $2-\sqrt{2}$ is the best approximation factor achievable by one-pass semi-streaming algorithms for \textsf{MBM}. It is unclear whether a first-pass \textsc{Greedy} matching could be embedded in the resulting construction without affecting its hardness, however, if possible, this would render Konrad's algorithm optimal for the considered class of two-pass streaming algorithms.

\subparagraph*{Independent and Concurrent Work.}
Independently and concurrently to our work, Assadi \cite{a21} gave a limitation result on the approximation factor achievable by {\em arbitrary} two-pass semi-streaming algorithms for \textsf{MBM}, i.e., algorithms with no restrictions on how they operate in the first pass. From the lower bound perspective, this setting is substantially harder to work with than the setting considered in this paper, where we assume that algorithms run \textsc{Greedy} in the first pass. Assadi showed that no (arbitrary) two-pass semi-streaming algorithm for \textsf{MBM} has an approximation factor better than $\left(1 - \Omega(\frac{\log \text{RS}(n)}{\log n})\right)$, where $\text{RS}(n)$ denotes the maximum number of disjoint induced matchings of size $\Omega(n)$ in any $n$-vertex graph. Determining $\text{RS}(n)$ is a challenging open problem in combinatorics, and, currently,  the best upper and lower bounds are still very far apart from each other:  $n^{\Omega(\frac{1}{\log \log n})} \le \text{RS}(n) \le n^{1-o(1)}$ \cite{flnrrs02, fhs17}. In the best case scenario, i.e., if $\text{RS}(n)$ was indeed as large as $n^{1-o(1)}$, their result would imply that no two-pass semi-streaming algorithm can achieve a better than $0.98$-approximation, and as long as $\text{RS}(n) = n^{\Omega(1)}$, their result would rule out small constant factor approximations.

\subparagraph*{Further Related Work.}
Besides two passes over the input, improvements over the \textsc{Greedy} algorithm can also be obtained under the assumption that the input stream is in random order. Assadi and Behnezhad \cite{ab21} recently showed that an approximation factor of $\frac{2}{3}+\epsilon$ can be obtained, for some fixed small but constant $\epsilon > 0$, building on Bernstein's breakthrough result \cite{b20}, and improving on previous algorithms \cite{b20,fhmrr20,k18,kmm12}. 
In insertion-deletion streams, where previously inserted edges may be deleted again, space $\tilde{\Theta}(n^{2-3\epsilon})$ is necessary \cite{dk20} and sufficient \cite{akly16,ccehmmv16} for computing a $n^\epsilon$-approximation (see also \cite{k15}).

\subparagraph*{Outline.} 
We first give notation and definitions in Section~\ref{sec:prelim}. Subsequently, we show in Section~\ref{sec:lb} that every two-pass semi-streaming algorithm that solely runs \textsc{Greedy} in the first pass cannot have an approximation ratio of $\frac{2}{3} + \epsilon$, for any $\epsilon > 0$. Our main algorithmic result, i.e., the combination of subsampling and $\textsc{Greedy}_d$, is presented in Section~\ref{sec:ub}. Finally, we conclude in Section~\ref{sec:conclusion}.

\section{Preliminaries}\label{sec:prelim}

Let $G =(A,B,E)$ be a bipartite graph with $V = A \cup B$ and $|V| = n$. For $F \subseteq E$ and $v \in V$, we write $\deg_F(v)$ to denote the degree of vertex $v$ in subgraph $(A, B, F)$. For any $U \subseteq V$ and $F \subseteq E$, $U(F)$ denotes the set of vertices in $U$ which are the endpoints of edges in $F$, and we denote its complement by $\overline{U(F)} = U \setminus U(F)$. For a subset of vertices $U \subseteq V$, we write $G[U]$ for the subgraph of $G$ induced by $U$.
For any edges $e,f \in E$, $e$ is \emph{incident} to $f$ if they share an endpoint. We say that $e$ and $f$ are \emph{vertex-disjoint} if $e$ is not incident to $f$. 
Lastly, for any two sets $X$ and $Y$, we define $X \oplus Y := (X \setminus Y) \cup (Y \setminus X)$ as their symmetric difference. 

A \emph{matching} in $G$ is a subset $M \subseteq E$ of vertex-disjoint edges. It is \emph{maximal} if every $e \in E \setminus M$ is incident to an edge in $M$.  We denote by $\mu(G)$ the \emph{matching number} of $G$, i.e., the cardinality of a largest matching. A \emph{maximum matching} is one of size $\mu(G)$. Additionally, $M$ is called an \emph{induced matching} if the edge set of the subgraph of $G$ induced by $V(M)$ is exactly $M$. 

\subparagraph*{Wald's Equation.}
We require the following well-known version of \textit{Wald's Equation}:

\begin{restatable}{lemma}{waldseq} \label{lem:walds_eq}
Let $X_1, X_2, \dots$ be a sequence of non-negative random variables with $\mathbb{E}[X_i] \leq \tau$, for all $i \leq T$, and let $T$ be a random stopping time for the sequence with $\mathbb{E}[T] < \infty$. Then:
\begin{align*}
\mathbb{E}[\sum_{i=1}^T X_i] \leq \tau \cdot \mathbb{E}[T] \ .
\end{align*}
\end{restatable}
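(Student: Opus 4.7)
The plan is to reduce the random sum to an infinite series of indicator-weighted terms, swap expectation and summation using non-negativity, and then exploit the stopping-time property to decouple $X_i$ from the indicator $\mathbf{1}[T \geq i]$. Finally I would recognise the resulting series as the standard tail-sum expression for $\mathbb{E}[T]$.

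Concretely, I would first write, pointwise,
\[
\sum_{i=1}^T X_i \;=\; \sum_{i=1}^\infty X_i \mathbf{1}[T \geq i],
\]
which holds because $\mathbf{1}[T \geq i] = 1$ precisely when $i \leq T$. Since the $X_i$ are non-negative, Tonelli's theorem lets me exchange expectation and sum to obtain
\[
\mathbb{E}\!\left[\sum_{i=1}^T X_i\right] = \sum_{i=1}^\infty \mathbb{E}\bigl[X_i \mathbf{1}[T \geq i]\bigr].
\]

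Next comes the decisive step. By the definition of a stopping time, the event $\{T \geq i\} = \{T \leq i-1\}^{c}$ is determined by information available strictly before step $i$, hence it is independent of $X_i$ (or, more generally, one can use the conditional bound $\mathbb{E}[X_i \mid \mathcal{F}_{i-1}] \leq \tau$ and then take outer expectation). Either way, this gives
\[
\mathbb{E}\bigl[X_i \mathbf{1}[T \geq i]\bigr] \;\leq\; \tau \cdot \mathbb{P}[T \geq i].
\]
Summing over $i \geq 1$ and applying the standard identity $\sum_{i=1}^\infty \mathbb{P}[T \geq i] = \mathbb{E}[T]$, which is valid for non-negative integer random variables with finite expectation, yields the stated bound $\tau \cdot \mathbb{E}[T]$.

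The only subtle point is the decoupling step: it relies on the $X_i$'s being independent of (or at least conditionally well-behaved relative to) the history that determines the stopping decision before time $i$. Given the way the lemma will be used in Section~\ref{sec:ub}, where the $X_i$'s come from fresh randomness at each step while $T$ is a stopping time for the natural filtration, this hypothesis will be satisfied, and no further obstacle arises beyond a careful justification of Tonelli's exchange and the tail-sum identity.
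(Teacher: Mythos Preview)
The paper does not actually prove this lemma: it is introduced in Section~\ref{sec:prelim} as ``the following well-known version of \textit{Wald's Equation}'' and left without proof. So there is no paper proof to compare against.

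Your argument is the standard one and is correct. The only point worth flagging is the one you already identify: the hypothesis ``$\mathbb{E}[X_i] \leq \tau$ for all $i \leq T$'' is informal, and to make the decoupling step rigorous one should read it as $\mathbb{E}[X_i \mid \mathcal{F}_{i-1}] \leq \tau$ on $\{T \geq i\}$ (or an equivalent formulation), so that
\[
\mathbb{E}\bigl[X_i \mathbf{1}[T \geq i]\bigr] \;=\; \mathbb{E}\bigl[\mathbf{1}[T \geq i]\,\mathbb{E}[X_i \mid \mathcal{F}_{i-1}]\bigr] \;\leq\; \tau \cdot \mathbb{P}[T \geq i]
\]
goes through. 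You note exactly this, and in the application in Lemma~\ref{lem:main_expec} the conditional bound is indeed what is established via the deferred-decision argument, so the use there is sound.
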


\section{Lower Bound} \label{sec:lb}
We now prove that every two-pass streaming algorithm for \textsf{MBM} with approximation factor $\frac{2}{3}+\epsilon$, for any $\epsilon > 0$, that solely runs \textsc{Greedy} in the first pass requires space $n^{1+\Omega(\frac{1}{\log \log n})}$. To this end, we adapt the lower bound by Goel et al. \cite{gkk12}, which we discuss first.

\subsection{Goel et al.'s Lower Bound for One-pass Algorithms}
Goel et al.'s lower bound is proved in the \emph{one-way two-party communication framework}. Two parties, denoted Alice and Bob, each hold subsets $E_1$ and $E_2$, respectively, of the input graph's edges. Alice sends a single message to Bob who, upon receipt, outputs a large matching. Goel et al. showed that there is a distribution $\lambda$ over input graphs so that every deterministic communication protocol with constant distributional error over $\lambda$ and approximation factor $\frac{2}{3}+\epsilon$, for any $\epsilon > 0$, requires a message of length $n^{1+\Omega(\frac{1}{\log \log n})}$. 
A similar result then applies for randomized constant error protocols by Yao's Lemma~\cite{yao1977probabilistic}, and the well-known connection between streaming algorithms and one-way communication protocols allows us to translate this lower bound to a lower bound on the space requirements of constant error one-pass streaming algorithms.

Goel et al.'s construction is based on the existence of a dense 
Ruzsa-Szemer\'{e}di graph:
\begin{definition}[Ruzsa-Szemer\'{e}di Graph] \label{defn:RS_graph}
 A bipartite graph $G=(A, B, E)$ is an \emph{$(r,t)$-Ruzsa-Szemer\'{e}di} graph (RS graph in short) if the edge set $E$ can be partitioned into $t$ disjoint matchings $M_1, M_2, \dots, M_t$ such that, for every $i$, (1) $|M_i| \ge r$; and (2) $M_i$ is an \emph{induced matching} in $G$.
\end{definition}
They give a construction for a family of $((\frac{1}{2} - \delta)n, n^{\Omega(\frac{1}{\log \log n})})$-RS graphs, for any small constant $\delta > 0$, on $2n$ vertices (with $|A| = |B| = n$) that we will extend further below.

\begin{figure}[t]
\centering
\fbox{
\begin{minipage}{0.95 \textwidth}
\begin{enumerate}
\item Let $G^{RS}=(A, B, E)$ be an $(r,t)$-RS graph with $|A| = |B| = N$ and $r = (\frac{1}{2} - \delta) \cdot N$, for some $\delta > 0$, and $t = N^{\Omega(\frac{1}{\log \log N})}$. 
\item For every $i \in [t]$, let $\widehat{M_i} \subseteq M_i$ be a uniform random subset of size $(\frac{1}{2} - 2\delta) \cdot N$ and let $E_1 = \cup_{i =1}^t \widehat{M_i}$.
\item Let $X$ and $Y$ each be disjoint sets of $(\frac{1}{2} + \delta) \cdot N$ vertices, which are also disjoint from $A \cup B$. Choose uniformly at random a special index $s \in [t]$.
\item Let $M_X^*$ and $M_Y^*$ be arbitrary perfect matchings between $X$ and $\overline{B(M_s)}$, and $Y$ and $\overline{A(M_s)}$, respectively. Then, let $E_2 = M_X^* \cup M_Y^*$.
\item Finally, $G = (A \cup X, B \cup Y, E_1 \cup E_2)$ which has $n = (3 + 2\delta) \cdot N$ vertices.
\end{enumerate}

Alice is given edges $E_1$ and Bob is given edges $E_2$.
\end{minipage}
}
\caption{Hard input distribution $\lambda$. \label{fig:lambda}}
\end{figure}

Their hard input distribution $\lambda$ for the two-party communication setting is displayed in Figure~\ref{fig:lambda}. Observe that the graphs $G \sim \lambda$ are such that $\mu(G) \ge \frac{3}{2} N$ since the matching $M_X^* \cup M_Y^* \cup \widehat{M_s}$ is of this size.

Goel et al. prove the following hardness result: 
\begin{theorem}\label{thm:goel-hardness}
 For any small $\epsilon > 0$, every deterministic $(\frac{2}{3} + \epsilon)$-approximation one-way two-party communication protocol with constant distributional error over $\lambda$ requires a message of size $n^{1+\Omega(\frac{1}{\log \log n})}$, where $n$ is the number of vertices in the input graph.
\end{theorem}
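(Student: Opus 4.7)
The plan is to combine a combinatorial structural lemma, which turns the approximation guarantee into a statement about recovering edges of $\widehat{M_s}$, with a direct-sum information argument exploiting that the subsets $\widehat{M_1}, \dots, \widehat{M_t}$ are mutually independent. First, I would prove the key structural claim that every matching $M^\star$ in $G \sim \lambda$ satisfies $|M^\star| \le |E_2| + |M^\star \cap \widehat{M_s}| = (1+2\delta)N + L_s$, where $L_s := |M^\star \cap \widehat{M_s}|$. The argument inspects the alternating paths and cycles of $M^\star \oplus E_2$: an augmenting path for $M^\star$ must start and end in the only vertices left unmatched by $E_2$, namely $A(M_s) \cup B(M_s)$; since $X \cup Y$ carry no $E_1$-edges, such a path cannot pass through $X \cup Y$ after the first $E_1$-step; and the induced-matching property of $M_s$ in $G^{RS}$ forces any $E_1$-edge with both endpoints in $A(M_s) \cup B(M_s)$ to lie in $M_s \cap E_1 = \widehat{M_s}$. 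A short case analysis then concludes that the only augmenting paths are single edges of $\widehat{M_s}$.

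Using $\mu(G) \ge \tfrac{3}{2}N$, a $(\tfrac{2}{3}+\epsilon)$-approximation satisfies $|M^\star| \ge (1 + \tfrac{3\epsilon}{2})N$, and the structural lemma then gives $L_s \ge \Omega(\epsilon N)$ provided $\delta$ is chosen smaller than a constant depending only on $\epsilon$. Because $G^{RS}$ is public and Bob reads $A(M_s), B(M_s)$ directly off $E_2$, he in particular knows the matching $M_s$ in full, so his output $M^\star$ exposes $\Omega(\epsilon N)$ edges of $\widehat{M_s}$ with constant probability. Writing $\Pi$ for Alice's deterministic message, using mutual independence of the $\widehat{M_i}$'s together with $\Pi = f(E_1)$ and applying the chain rule yields the super-additivity
\[ H(\Pi) \;=\; I(\widehat{M_1}, \dots, \widehat{M_t} \,;\, \Pi) \;\ge\; \sum_{i=1}^{t} I(\widehat{M_i} \,;\, \Pi). \]

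Finally, averaging the recovery guarantee over the uniform choice of $s$ and applying Markov gives $\Omega(t)$ indices $i$ such that, conditional on $s=i$, Bob recovers $\Omega(N)$ edges of $\widehat{M_i}$ with constant probability. A Fano/counting argument on the uniform subset $\widehat{M_i} \subset M_i$ (of sizes $(\tfrac{1}{2}-2\delta)N$ inside $(\tfrac{1}{2}-\delta)N$) then converts this into $I(\widehat{M_i};\Pi) = \Omega(N)$ for each such good index, whence $|\Pi| \ge H(\Pi) = \Omega(Nt) = N^{1+\Omega(1/\log\log N)} = n^{1+\Omega(1/\log\log n)}$, since $t = N^{\Omega(1/\log\log N)}$ and $n = \Theta(N)$. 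The main obstacle is this last Fano/counting step: one must translate ``recover a constant fraction of the $\Omega(N)$ random hidden edges with constant probability'' into $\Omega(N)$ bits of mutual information, while correctly accounting for the dependence of $E_2$ on $s$. The cleanest route is to condition on $s=i$ and observe that $E_2 \perp E_1 \mid s$, so giving Bob $E_2$ as side information does not leak anything about the other $\widehat{M_j}$'s into $\Pi$.
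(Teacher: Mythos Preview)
The paper does not prove Theorem~\ref{thm:goel-hardness}. It is quoted verbatim as the main result of Goel, Kapralov, and Khanna \cite{gkk12} and used as a black box in the proof of Theorem~\ref{thm:CC_lower}; the surrounding text reads ``Goel et al.\ prove the following hardness result:'' and no argument follows. So there is no ``paper's own proof'' to compare against.

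That said, what you have written is a faithful reconstruction of the Goel--Kapralov--Khanna argument. Your structural lemma $|M^\star|\le |E_2|+|M^\star\cap\widehat{M_s}|$ is exactly their key combinatorial step, and your justification via the degree-$1$ vertices in $X\cup Y$ and the induced-matching property of $M_s$ is the right one. The information-theoretic direct sum $H(\Pi)\ge\sum_i I(\widehat{M_i};\Pi)$ using independence of the $\widehat{M_i}$'s, followed by an averaging over $s$ and a counting/Fano bound for each good index, is also how their proof proceeds. Two small remarks on the step you flag as the obstacle. First, since $s$ is independent of $E_1$ (and hence of $\Pi$), you indeed have $I(\widehat{M_i};\Pi)=I(\widehat{M_i};\Pi\mid s=i)$, so conditioning on $s=i$ is free on the information side; together with $E_2\perp E_1\mid s$ this lets you treat Bob's output as a deterministic function of $\Pi$ alone. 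Second, when you convert ``$\Omega(\epsilon N)$ certified edges of $\widehat{M_i}$'' into mutual information, note that the entropy of a uniform $(\tfrac12-2\delta)N$-subset of a $(\tfrac12-\delta)N$-set is only $\Theta(\delta N)$, and the information revealed by $\ell$ certified elements is roughly $\ell\cdot\log\frac{|M_i|}{|\widehat{M_i}|}=\Theta(\ell\delta)$; with $\ell=\Theta(\epsilon N)$ and $\delta=\Theta(\epsilon)$ this is $\Theta(\epsilon^2 N)$, which is $\Omega_\epsilon(N)$ as needed, but you should state the dependence on $\epsilon$ rather than just ``$\Omega(N)$''.
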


\subsection{Our Lower Bound Construction}
In the following, we extend Goel et al.'s lower bound to the two-pass situation where a $\textsc{Greedy}$ matching is computed in the first pass. To this end, we need to augment Alice and Bob's inputs, as defined by distribution $\lambda$, by a maximal matching $M$ in the input graph $G \sim \lambda$, which then results in a distribution $\lambda^+$. Observe that if we place the edges of $M$ at the beginning of the input stream, then running \textsc{Greedy} in the first pass recovers exactly the matching $M$. Hence, when abstracting the second pass as a two-party communication problem, both Alice and Bob already know the matching $M$. Our main argument then is as follows: We will show that any two-party protocol under distribution $\lambda^+$ can also be used for solving the distribution $\lambda$ with the same distributional error, message size, and similar approximation factor. The hardness of Theorem~\ref{thm:goel-hardness}, therefore, carries over.

\subsubsection{Ruzsa-Szemer\'{e}di Graphs with Near-Perfect Matchings}
Adding a maximal matching $M$ to Alice's and Bob's input requires care since we need to ensure that the hardness of the construction is preserved. Our construction requires that the underlying RS graph contains a near-perfect matching, which is a property that is not guaranteed by Goel et al.'s RS graph construction. 

We therefore augment Goel et al.'s construction by  complementing every induced matching, $M_i$, with a vertex-disjoint counterpart, $M_i'$, without destroying the RS graph properties. Then, since $M_i$ and $M_i'$ are vertex-disjoint, $M_i \cup M_i'$ constitutes a matching, and, since both $M_i$ and $M_i'$ each already match nearly half of the vertices, $M_i \cup M_i'$ constitutes a near-perfect matching in our family of RS graphs. 

We will now present Goel et al.'s RS graph construction and then discuss how the additional matchings $M_i'$ can be added to the construction.

\paragraph*{Goel et al.'s Ruzsa-Szemer\'{e}di Graph Construction} \mbox{} \vspace{5pt} \\ \noindent
For an integer $m$, let $X = Y = [m^2]^m$ be the vertex sets of a bipartite graph, and let $N = |X| = |Y| =  m^{2m}$ denote their cardinalities. Every induced matching $M_I$ of Goel et al.'s RS graph construction is indexed by a subset of coordinates $I \subset [m]$ of size $\frac{\delta m}{6}$, for some small $ \delta > 0$. Then, the edges $M_I$ are defined by means of a colouring of the vertices $X$ and $Y$ (which depends on $I$), that we discuss first.

\subparagraph*{Colouring the Vertex Sets.} Let $w = \frac{(2 + \delta)m}{3}$. Then, define a partition of the natural numbers into groups of size $w$ such that, for all $k \in \mathbb{N}_0$,
\begin{align*}
R_k &= \left[ kw, kw + \frac{m}{3} \right) & \textrm{where } |R_k| = \frac{m}{3}, \\
W_k &=	\left[ kw + \frac{m}{3}, kw + \frac{m}{3} + \frac{\delta m}{6} \right) & \textrm{where }|W_k| = \frac{\delta m}{6}, \\
B_k &= \left[ kw + \frac{m}{3} + \frac{\delta m}{6},  kw + \frac{2m}{3} + \frac{\delta m}{6} \right) & \textrm{where } |B_k| = \frac{m}{3}, \\
W_k' &=\left[ kw + \frac{2m}{3} + \frac{\delta m}{6}, (k+1)w \right)  & \textrm{where } |W_k'| = \frac{\delta m}{6}.
\end{align*}
See Figure 3 for an illustration.

\begin{figure}[h]
\centering
\includegraphics[width=0.9\textwidth]{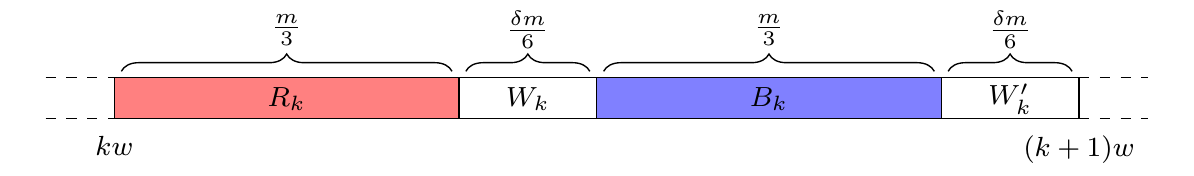}
\caption{One group of the partitioned number line of natural numbers.} \label{fig:nat_partition}
\end{figure}

Given $I$, let $L_s = \{ \vec{x} \in [m^2]^m : \sum_{i \in I} x_i = s \}$ represent a layer of vectors in $[m^2]^m$ where the $1$-norm of their subvectors\footnote{A subvector in this context is the result of a trivial mapping of the vector to a lower dimensional subspace.} w.r.t. $I$ is $s$, for all $s \in \mathbb{N}_0$. Next, colour the vectors in each $L_s$ either red if $s \in R_k$, blue if $s \in B_k$, or white if $s \in W_k \cup W_k'$, for some $k \in \mathbb{N}_0$. Doing this gives the following coloured strips for any $k \in \mathbb{N}_0$ (see Figure~\ref{fig:induced_mat}):
\[R(k) = \bigcup_{\forall s \in R_k} L_s, \quad W(k) = \bigcup_{\forall s \in W_k} L_s, \quad B(k) = \bigcup_{\forall s \in B_k} L_s \quad \textrm{and} \quad W'(k) = \bigcup_{\forall s \in W'_k} L_s.\] Next, these strips are grouped together by colour, as follows:
\[R = \bigcup_{\forall k \in \mathbb{N}_0} R(k), \quad W = \bigcup_{\forall k \in \mathbb{N}_0} W(k), \quad B = \bigcup_{\forall k \in \mathbb{N}_0} B(k) \quad \textrm{and} \quad W' = \bigcup_{\forall k \in \mathbb{N}_0} W'(k).\]

We now define the colours of the vertices $X$ and $Y$ as follows: A vertex $\vec{z} \in X \cup Y$ is coloured red if $\vec{z} \in R$, blue if $\vec{z} \in B$, and white if $\vec{z} \in W \cup W'$. Let $R^X = R \cap X$ and define $B^X, W^X, W'^X, R^Y, B^Y, W^Y, W'^Y$ similarly. 

\begin{figure}[h]
\centering
\includegraphics[width=\textwidth]{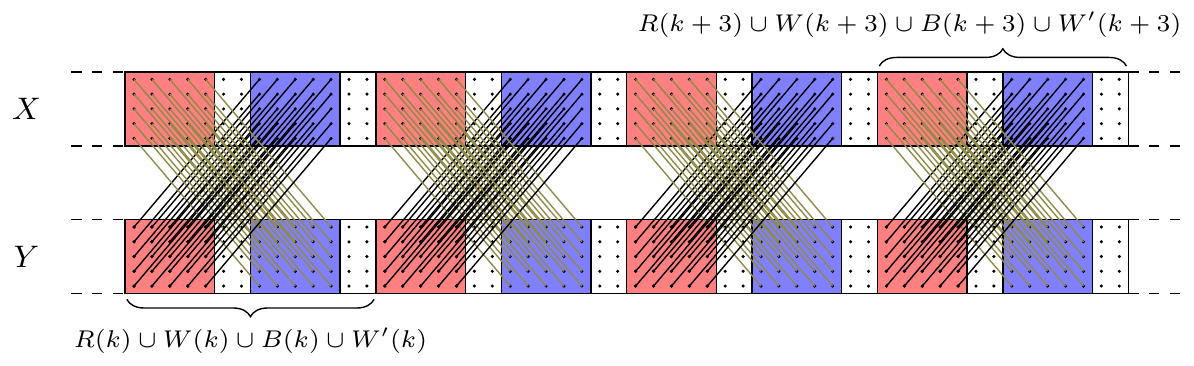}
\caption{Illustration of the vertex colouring and induced matchings for a fixed $I$. The black edges are $M_I$ and the gold ones are $M_I'$.} \label{fig:induced_mat}
\end{figure}

\subparagraph*{Definition of the Induced Matchings.}
Goel et al. construct the edges of the induced matching $M_I$ by pairing every blue vertex $\vec{b} \in B^X$ with each coordinate greater than $\frac{2}{\delta} + 1$ to a red vertex $\vec{r} \in R^Y$, such that 
$\vec{r} = \vec{b} - (\frac{2}{\delta} + 1) \cdot \vec{1}_I$, 
where $\vec{1}_I$ is the characteristic vector of set $I$. See Figure~\ref{fig:induced_mat} for an illustration.

Goel et al. show that $M_I$ is large, i.e., $|M_I| \ge (\frac{1}{2} - \delta) \cdot N - o(N)$. 
Observe that any two distinct indexing sets $I$ and $J$ produce their own vertex colourings and matchings $M_I$ and $M_J$. They prove that, 
as long as the index sets $I$ and $J$ have a sufficiently small intersection (at most $(\frac{5 \delta}{12})( \frac{\delta m}{6})$), $M_I$ and $M_J$ are induced matchings w.r.t. to each other. Hence, they show the existence of a large family $\mathcal{T}$, with $|\mathcal{T}| = N^{\Omega(\frac{1}{\log \log N})}$, of subsets $I \subset [m]$ whose pairwise intersections are of size at most $(\frac{5 \delta}{12})( \frac{\delta m}{6})$. Then, the matchings of the RS graph are identified as the matchings $M_I$, for every $I \in \mathcal{T}$.

\paragraph*{Extending Goel et al.'s Construction} \mbox{} \vspace{5pt} \\ \noindent
For every indexing set $I \in \mathcal{T}$ and respective matching $M_I$ of Goel et al.'s construction, we symmetrically construct an additional matching $M_I'$ by pairing every blue vertex in $Y$ (instead of $X$), $\vec{b} \in B^Y$, with each coordinate greater than $\frac{2}{\delta} + 1$, to a red vertex in $X$, $\vec{r} \in R^X$, such that 
$\vec{r} = \vec{b} - (\frac{2}{\delta} + 1) \cdot \vec{1}_I$. See Figure~\ref{fig:induced_mat} for an illustration.

We immediately see that, by virtue of being symmetrical, $|M_I'| = |M_I| (\ge (\frac{1}{2} - \delta) \cdot N -o(N))$. 

Furthermore, by construction, $M_I'$ and $M_I$ are vertex-disjoint matchings, hence $M_I \cup M_I'$ is a matching, and, taking their respective sizes into account, $M_I \cup M_I'$ is a near-perfect matching as required. Since, for any distinct $I,J \in \mathcal{T}$, $M_I$ and $M_J$ are induced matchings w.r.t. each other, the symmetrical nature of our additional matchings implies the same for $M_I'$ and $M_J'$. However, showing that $M_I$ and $M_J'$ are induced with respect to each other is not immediately clear. Fortunately, Goel et al.'s proof already implicitly shows this, and, for completeness, we reproduce the decisive argument:

\begin{restatable}{lemma}{inducedproof} \label{lem:inducedproof}
Given two distinct sets of indices $I$ and $J$ such that $|I \cap J| \leq (\frac{5 \delta}{12})( \frac{\delta m}{6})$, no edge in $M_{I}$ is induced by $M_{J}'$, for any small enough $\delta > 0$.
\end{restatable}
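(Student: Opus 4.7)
The plan is to translate the induced-matching condition into a single arithmetic constraint on $|I \cap J|$ and then rule it out using the same gap estimates that Goel et al.\ apply to the symmetric pair $(M_I, M_J)$.

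First I would unpack the statement. Every edge $e = (\vec{b}, \vec{r}) \in M_I$ has $\vec{b} \in B^X$ and $\vec{r} \in R^Y$, with colours taken with respect to $I$. Because $M_J'$ is built symmetrically to $M_J$, its $X$-endpoints lie in $R^X$ and its $Y$-endpoints lie in $B^Y$, now with colours taken with respect to $J$. So $e$ is induced by $M_J'$ if and only if $\vec{b} \in R^X_J$ and $\vec{r} \in B^Y_J$, and it therefore suffices to forbid the simultaneous existence of such a $\vec{b}$ and its $M_I$-partner $\vec{r} = \vec{b} - (\frac{2}{\delta}+1)\vec{1}_I$.

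Next, I would set $s_X = \sum_{j \in J} \vec{b}_j$ and $s_Y = \sum_{j \in J} \vec{r}_j$. Since $\vec{1}_I \cdot \vec{1}_J = |I \cap J|$, the identity $\vec{r} = \vec{b} - (\frac{2}{\delta}+1)\vec{1}_I$ gives $s_X - s_Y = \Delta := (\frac{2}{\delta}+1)\cdot|I \cap J|$. The hypothesis $|I \cap J| \le \frac{5\delta}{12}\cdot\frac{\delta m}{6}$ then yields $\Delta \le \frac{5(2+\delta)\delta m}{72}$, which is strictly less than $\frac{\delta m}{6}$ for all $\delta < 2/5$.

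The core step is a case split on the strip indices $k,k'$ with $s_X \in R_k$ and $s_Y \in B_{k'}$. Since $\Delta \ge 0$, only $k \ge k'$ is possible. If $k' = k$, the intervals $R_k$ and $B_k$ are separated by the block $W_k$ of width $\delta m/6$, forcing $\Delta \ge \delta m/6$. If $k' = k - 1$, then using $w = \frac{(2+\delta)m}{3}$ one checks that the top of $B_{k-1}$ equals $kw - \delta m/6$ while the bottom of $R_k$ equals $kw$, again forcing $\Delta \ge \delta m/6$. If $k' \le k - 2$, the required gap jumps to $\Omega(m)$, which is far above $\Delta$. Each case contradicts $\Delta < \delta m/6$, and hence no such edge exists.

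The main obstacle is not the arithmetic, which is essentially Goel et al.'s verbatim, but setting up the colour-bookkeeping correctly so that the mixed pair $(M_I, M_J')$ is governed by exactly the same quantity $s_X - s_Y$ as the symmetric pair $(M_I, M_J)$; once this identification is made, the numerical gap estimates go through unchanged, and the lemma follows.
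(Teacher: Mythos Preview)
Your approach is essentially the paper's own: compute $\sum_{j\in J}(\vec{b}-\vec{r})_j$ once from the $M_I$-edge relation to get $(\tfrac{2}{\delta}+1)\,|I\cap J|$, and once from the colour constraint imposed by membership in $V(M_J')$ to get a lower bound of $\tfrac{\delta m}{6}$, then compare. The paper compresses your strip-index case split into the single sentence ``$\vec{b}$ and $\vec{r}$ are separated by a white strip'' and works with the absolute value $|\sum_{j\in J}(\vec{b}-\vec{r})_j|$ throughout, which sidesteps tracking the sign of $\Delta$.

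One small slip: in your case $k'=k$ you have $s_X\in R_k$ and $s_Y\in B_k$, but on the number line $R_k$ lies \emph{below} $W_k$ which lies below $B_k$, so this case actually gives $\Delta\le -\tfrac{\delta m}{6}<0$, not $\Delta\ge\tfrac{\delta m}{6}$. This already contradicts $\Delta\ge 0$, so the case is still ruled out and the argument survives; but your filtering ``only $k\ge k'$ is possible'' should really read $k>k'$. Using $|\Delta|$ as the paper does avoids this bookkeeping entirely.
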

\begin{proof}
Let $\vec{b} \in B^X$ be matched to $\vec{r} \in R^Y$ by $M_{I}$, i.e., $\vec{b} - \vec{r} = (\frac{2}{\delta} + 1)\cdot \vec{1}_{I}$. If the edge $(\vec{b}, \vec{r})$ is induced by $M_{J}'$, then one endpoint is coloured blue and the other red in the colouring of $X$ and $Y$ with respect to $J$. Hence, $\vec{b}$ and $\vec{r}$ are separated by a single white strip (see Figure~\ref{fig:induced_mat}) and
\begin{align} \label{eq:vec_strip_separate}
	|\sum_{j \in J} (\vec{b} - \vec{r})_j| \geq \frac{\delta m}{6} \ .
\end{align}
On the other hand, 
\begin{align*}
	|\sum_{j \in J} (\vec{b} - \vec{r})_j|  & = |\sum_{j \in J} ((\frac{2}{\delta} + 1)\cdot \vec{1}_{I})_j| = (\frac{2}{\delta} + 1) \cdot |I \cap J| \leq (\frac{5}{6} + \frac{5 \delta}{12})(\frac{\delta m}{6}) \ ,
\end{align*}
which contradicts Equation~\ref{eq:vec_strip_separate}  for small enough $\delta$.
\end{proof}

We thus obtain the following theorem:

\begin{theorem} \label{thm:RSgraphdense+perfect}
For any small enough constant $\delta > 0$, there exists a family of bipartite $(r,t)$-Ruzsa-Szemer\'{e}di graphs where $|A| = |B| = N$, $r = (\frac{1}{2} - \delta) \cdot N$, and $t = N^{\Omega(\frac{1}{\log \log N})}$ such that there are $N^{\Omega(\frac{1}{\log \log N})}$ disjoint near-perfect matchings each of size exactly $(1 - 2\delta) \cdot N$.
\end{theorem}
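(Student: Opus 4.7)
The plan is to show that Goel et al.'s RS graph, augmented with the symmetric matchings $M_I'$ for every $I \in \mathcal{T}$, simultaneously fulfils the RS graph property with the stated parameters and contains $|\mathcal{T}| = N^{\Omega(1/\log\log N)}$ edge-disjoint near-perfect matchings. Concretely, I would define the graph as $(A, B, E)$ with $E = \bigcup_{I \in \mathcal{T}}(M_I \cup M_I')$ for Goel et al.'s index family $\mathcal{T}$, use $\{M_I\}_{I \in \mathcal{T}} \cup \{M_I'\}_{I \in \mathcal{T}}$ as the partition into $t = 2|\mathcal{T}|$ induced matchings, and use $\{M_I \cup M_I'\}_{I \in \mathcal{T}}$ as the near-perfect matchings.

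First I would establish the size bounds. Goel et al.'s analysis already gives $|M_I| \ge (1/2 - \delta)N - o(N)$, and the evident $X$--$Y$ symmetry between the constructions of $M_I$ and $M_I'$ yields $|M_I'| = |M_I|$, which supplies the required $r = (1/2 - \delta) N$ for each part of the partition. The crucial fact that $V(M_I) \subseteq B^X \cup R^Y$ is disjoint from $V(M_I') \subseteq B^Y \cup R^X$ then shows that $M_I \cup M_I'$ is actually a matching of size at least $(1-2\delta)N - o(N)$; by a harmless adjustment of $\delta$, this can be taken to equal $(1-2\delta)N$ as required.

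Next I would verify the induced matching property. For an arbitrary edge $e \in E$ with both endpoints in $V(M_I)$, I would enumerate which part of the partition $e$ can belong to and rule out every possibility except $e \in M_I$: the case $e \in M_J$ with $J \ne I$ is Goel et al.'s original argument, the case $e \in M_J'$ with $J \ne I$ is exactly Lemma~\ref{lem:inducedproof}, and the case $e \in M_I'$ is immediate from the vertex-disjointness above. A symmetric argument then shows each $M_I'$ is also induced. Finally, I would check edge-disjointness of the $|\mathcal{T}|$ near-perfect matchings $M_I \cup M_I'$: the $M_I$'s and $M_I'$'s are pairwise edge-disjoint as parts of the induced-matching partition, and any edge of $M_I$ differs from any edge of $M_J'$ because the defining equations $\vec b - \vec r = (2/\delta + 1) \vec 1_I$ and $\vec b' - \vec r' = (2/\delta + 1)\vec 1_J$ would carry opposite signs if the two edges were identified, which is impossible as the $\vec 1_I, \vec 1_J$ are non-negative and non-zero.

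The only genuinely new ingredient, and hence where I expect the main obstacle to lie, is the mixed induced-matching interaction between $M_I$ and $M_J'$ for distinct $I, J$; this is exactly what Lemma~\ref{lem:inducedproof} delivers by balancing the rigid difference $\vec b - \vec r = (2/\delta + 1)\vec 1_I$ against the small pairwise intersection $|I \cap J| \le (5\delta/12)(\delta m / 6)$ to contradict the white-strip separation inequality $|\sum_{j \in J}(\vec b - \vec r)_j| \ge \delta m / 6$. Once this lemma is in hand, every remaining verification reduces either to a direct appeal to Goel et al.'s original argument or to the $X$--$Y$ symmetry of the extension, so the theorem follows by assembly.
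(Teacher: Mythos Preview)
Your proposal is correct and follows essentially the same approach as the paper: the paper's proof of this theorem is precisely the discussion preceding the theorem statement together with Lemma~\ref{lem:inducedproof}, and you have faithfully organised those same ingredients (Goel et al.'s size and inducedness bounds, the $X$--$Y$ symmetry giving $|M_I'| = |M_I|$ and the inducedness of the $M_I'$ family, vertex-disjointness of $M_I$ and $M_I'$, and Lemma~\ref{lem:inducedproof} for the mixed case). Your explicit verification of edge-disjointness of the near-perfect matchings is slightly more careful than the paper, but note it already follows from the induced-matching property you establish.
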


\subsubsection{Lower Bound Proof}
Equipped with RS graphs with near-perfect matchings and input distribution $\lambda$, we now define our hard input distribution $\lambda^+$, see Figure~\ref{fig:lambda+}. 

\begin{figure}[t]
\centering
\fbox{
\begin{minipage}{0.95\textwidth}
\begin{enumerate}
 \item Let $G^{RS}$ be an RS graph as in Theorem~\ref{thm:RSgraphdense+perfect}. Fix some induced matching $M_i$ and let $M_i \cup M_i'$ be its near-perfect matching of size $(1-2\delta) \cdot N$.
 \item Let $F$ be an arbitrary set of $2\delta N$ additional edges such that $P = M_i \cup M_i' \cup F$ is a perfect matching in $G^{RS}$.
 \item Consider distribution $\lambda$ constructed using RS graph $G^{RS} \setminus (M_i \cup M_i')$.
 \item For every $G = (V,E) \sim \lambda$, let $P_G =  M_i \cup M_i' \cup (F \setminus E)$ (to avoid multi-edges) and add $P_G$ to $G$ to obtain the input graph $G^+$. 
\end{enumerate}
The edges $P \cup E_1$ are given to Alice and the edges $P \cup E_2$ are given to Bob (recall that $E_1$ and $E_2$ are defined in distribution $\lambda$).
\end{minipage}
}
\caption{Hard input distribution $\lambda^+$. \label{fig:lambda+}}\end{figure}

We are now ready to prove our main lower bound theorem:
\begin{theorem} \label{thm:CC_lower}
For any $\epsilon > 0$, every deterministic $(\frac{2}{3} + \epsilon)$-approximation one-way communication protocol with constant distributional error over $\lambda^+$ for \textsf{MBM} requires a message of size $n^{1+\Omega(\frac{1}{\log \log n})}$, where $n$ is the number of vertices in the input graph.
\end{theorem}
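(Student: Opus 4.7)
The plan is to reduce Goel et al.'s hardness result (Theorem~\ref{thm:goel-hardness}) for $\lambda$ to the desired statement for $\lambda^+$. Suppose $\Pi^+$ is a one-way deterministic protocol achieving $(2/3+\epsilon)$-approximation on $\lambda^+$ with constant distributional error $\epsilon_0$ and message length $s$. I will construct a protocol $\Pi$ for $\lambda$ with the same error and message length achieving ratio $2/3+\epsilon'$ for some $\epsilon' = \Omega(\epsilon)$, provided the RS-graph density parameter $\delta$ is chosen sufficiently small as a function of $\epsilon$ (for instance $\delta \leq \epsilon/10$). Applying Theorem~\ref{thm:goel-hardness} with parameter $\epsilon'$ will then force $s = n^{1+\Omega(1/\log\log n)}$.

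The reduction itself is direct. Because $\lambda^+$ is obtained from $\lambda$ by appending the publicly known matching $P = M_i \cup M_i' \cup F$, Alice and Bob can each locally form the inputs $E_1 \cup P$ and $E_2 \cup P$, execute $\Pi^+$ with $s$ bits of communication, and obtain a matching $M^*$ in $G^+$. With probability $\geq 1 - \epsilon_0$, $|M^*| \geq (2/3+\epsilon)\mu(G^+) \geq (2/3+\epsilon)(3/2)N$. Bob then outputs $\widetilde{M}$, the maximum matching of the graph $H := (M^* \cap E_1) \cup E_2$, which is a matching in $G$.

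The analysis of $|\widetilde{M}|$ proceeds in three steps. First, since $H$ is the union of two matchings it has maximum degree two, and a case analysis on whether the endpoints of each $E_1$-edge of $M^*$ fall inside or outside the critical region $A(M_s) \cup B(M_s)$ shows that $H$ is a disjoint union of isolated $E_2$-edges together with paths of length $1$, $2$, or $3$, and that its maximum matching has size exactly $q_0 + |E_2| = q_0 + (1+2\delta)N$, where $q_0$ denotes the number of $E_1$-edges of $M^*$ whose two endpoints both lie in $A(M_s) \cup B(M_s)$. Second, using the shift identity $\vec{r} = \vec{b} - (2/\delta+1)\vec{1}_{I_i}$ defining the edges of $M_i$ and $M_i'$, the widths of Goel et al.'s coloring regions, and the small-intersection bound $|I_i \cap I_s| \leq (5\delta/12)(\delta m/6)$ guaranteed by the family $\mathcal{T}$, one verifies that no edge of $M_i$ or $M_i'$ can have both endpoints inside $A(M_s) \cup B(M_s)$; consequently, the $P$-edges of $M^*$ that lie in the critical region come only from $F$, whence their count $p_0$ is at most $|F| = 2\delta N$. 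Third, combining the vertex-cover constraints on the two halves of $A \cup B$, the bound $p_0 \leq 2\delta N$, and the lower bound on $|M^*|$, a short linear-programming argument yields $q_0 \geq (3\epsilon/2 - 4\delta)N$, and hence $|\widetilde{M}| \geq (1 + 3\epsilon/2 - 2\delta)N$. Comparing with $\mu(G) \leq (3/2+\delta)N$ gives an approximation ratio of at least $2/3 + \epsilon - O(\delta)$.

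The main obstacle is the second step: formally showing that the algebraic structure of the RS graph prevents edges of $M_i$ or $M_i'$ from lying inside the critical subgraph $G[A(M_s) \cup B(M_s)]$ for any $s \neq i$. The key computation is to start from a blue-range coordinate sum under coloring $i$, apply the shift $(2/\delta+1)|I_i \cap I_s|$, and verify that the result, re-examined under the coloring for $s$, always lands in the blue or white regions and never in the red one, because $|I_i \cap I_s|$ is much smaller than the width $\delta m/6$ of a white strip. Once this structural claim is proved (in the same spirit as Lemma~\ref{lem:inducedproof}), the remaining LP and asymptotic manipulations are routine, and choosing $\delta$ to be a sufficiently small constant multiple of $\epsilon$ secures $\epsilon' > 0$ so that Theorem~\ref{thm:goel-hardness} yields the desired message-size lower bound.
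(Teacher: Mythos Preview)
Your reduction is correct and follows the same scheme as the paper's: simulate the $\lambda^+$-protocol on a $\lambda$-instance by locally adding the public matching $P$, then post-process the output $M^*$ into a matching in $G$. Your output---the maximum matching in $H=(M^*\cap E_1)\cup E_2$---is in fact the same object the paper uses (there written as the largest matching in $M_X^*\cup M_Y^*\cup(M_{\text{out}}^+\setminus P_G)$). The difference is only in how the size of this matching is analysed. The paper argues in one stroke: keep the non-$P$ edges of $M^*$ and swap each edge of $M^*\cap(M_i\cup M_i')$ for an incident edge of $M_X^*\cup M_Y^*$; such an incident edge always exists precisely because $M_i$, $M_i'$, and $M_s$ are induced matchings in the RS graph of Theorem~\ref{thm:RSgraphdense+perfect}, so no edge of $M_i\cup M_i'$ can lie entirely inside $V(M_s)$. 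This swap loses only the at most $|F|=2\delta N$ edges of $M^*\cap F$, giving $|M_{\text{out}}|\ge|M^*|-2\delta N$ immediately. Your route---computing the exact formula $q_0+|E_2|$ for the maximum matching of $H$ and then lower-bounding $q_0$ via a vertex-cover count on $\overline{A(M_s)}\cup\overline{B(M_s)}$---arrives at an equivalent bound but with more machinery. In particular, what you identify as the ``main obstacle'' (Step~2) requires no new coloring computation: it is exactly the induced-matching property that defines an RS graph and is already established by Theorem~\ref{thm:RSgraphdense+perfect} (and Lemma~\ref{lem:inducedproof} for the cross terms). One small technical wrinkle in your write-up: Bob does not know $E_1$, so he cannot form $M^*\cap E_1$ directly; having him use $M^*\setminus(E_2\cup P)$ instead costs at most another $|F\cap E_1|\le 2\delta N$ edges, which is harmless after shrinking $\delta$.
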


\begin{proof}
Let $\gamma^+$ be a deterministic $(\frac{2}{3} + \epsilon)$-approximation protocol that solves distribution $\lambda^+$ with constant distributional error. Given $\gamma^+$, we will now define a protocol $\gamma$ that solves distribution $\lambda$ with the same communication cost, same error, and approximation ratio strictly better than $\frac{2}{3}$. Invoking Theorem~\ref{thm:goel-hardness} then proves our result.

The protocol $\gamma$ is easy to obtain: 
Observe that $P$ in distribution $\lambda^+$ is the same for every sampled input graph $G^+ \sim \lambda^+$. Hence, in protocol $\gamma$, Alice and Bob first make sure that the edges $P$ are included in their inputs. This is achieved by Alice adding the edges $P \setminus E_1 = P_G$ to her input, and Bob adding the edges $P$ to his input. In doing so, Alice and Bob's input is equivalently distributed to choosing an input graph $G^+$ from $\lambda^+$. Alice and Bob can, therefore, run protocol $\gamma^+$ which produces an output matching $M_{\text{out}}^+$. Bob then outputs the largest matching $M_{\text{out}}$ among the edges $M^*_X \cup M^*_Y \cup (M_{\text{out}}^+  \setminus P_G)$ as the output of the protocol $\gamma$.

Next, we will argue that $|M_{\text{out}}| \ge |M_{\text{out}}^+| - |F| = |M_{\text{out}}^+| - 2 \delta N$. We can construct a matching $\tilde{M}$ of this size as follows: First, add every edge $e  \in M_{\text{out}}^+$ that is not contained in $P$ to $\tilde{M}$. Second, for every edge $e \in M_{\text{out}}^+ \cap (M_i \cup M_i')$, we insert the incident edge to $e$ that is contained in $M_X^* \cup M_Y^*$ into $\tilde{M}$ (notice that these incident edges always exist except for edges from the special induced matching). This implies that $|M_{\text{out}}| \ge |\tilde{M}| \ge |M_{\text{out}}^+| - |F|$.

Recall that $\mu(G) \ge \frac{3}{2}N$ and, since $G$ is a subgraph of $G^+$, $\mu(G^+) \ge \mu(G)$. This implies that $N \le \frac{2}{3} \mu (G^+)$. Since $\gamma^+$ is a $(\frac{2}{3}+\epsilon)$-approximation protocol, we have 
$|M_{\text{out}}^+| \ge (\frac{2}{3}+\epsilon)\mu(G^+)$, and thus:
\[|M_{\text{out}}| \ge |M_{\text{out}}^+| - 2 \delta N \ge (\frac{2}{3}+\epsilon)\mu(G^+)  - 2 \delta \frac{2}{3} \mu(G^+) = (\frac{2}{3} + \epsilon - \frac{4}{3} \delta)\mu(G^+) \ .\]
Hence, setting $\delta < \frac{3}{4} \epsilon$ in distribution $\lambda$ yields a protocol with approximation ratio strictly above $\frac{2}{3}$. This, however, implies that $\gamma$ requires a message of length $n^{1+\Omega(\frac{1}{\log \log n})}$ (Theorem~\ref{thm:goel-hardness}), and since the message sent in $\gamma$ and $\gamma^+$ is equivalent, the result follows.
\end{proof}
Applying Yao's Lemma~and the usual connection between streaming algorithms and one-way communication protocols, we obtain our main lower bound result:

\begin{manualtheorem}{\ref{thm:lb}} 
 For any $\epsilon > 0$, every (possibly randomised) two-pass streaming algorithm for \textsf{MBM} with approximation ratio $\frac{2}{3} + \epsilon$ that solely computes a \textsc{Greedy} matching in the first pass requires $n^{1+\Omega(\frac{1}{\log \log n})}$ space, where $n$ is the number of vertices in the graph.
\end{manualtheorem}

\section{Algorithm} \label{sec:ub}

In this section, we combine the subsampling approach as used by Konrad \cite{k18} and the semi-matching approach as used by Esfandiari et al. \cite{ehm16} and Kale and Tirodkar \cite{kt17} in order to find many disjoint $3$-augmenting paths, see Algorithm~\ref{alg:MainAlgo}.

The input to Algorithm~\ref{alg:MainAlgo} is a stream of edges $\pi$ of a bipartite graph $G=(A, B, E)$, a maximal matching $M$ in $G$ (e.g., computed in a first pass by \textsc{Greedy}), a sampling probability $p$, and an integral degree bound $d$. First, each edge of $M$ is included in $M'$ with probability $p$. Then, while processing the stream, degree-$d$-bounded semi-matchings $S_L$ and $S_R$ are computed using the algorithm $\textsc{Greedy}_d$ (see Algorithm~\ref{alg:greedyd} in Section~\ref{sec:intro}). The algorithm then returns a largest subset of vertex-disjoint $3$-augmenting paths $\mathcal{Q}$. We can thus obtain a matching of size $|M| + |\mathcal{Q}|$.

\begin{figure}[t]
\centering
\begin{minipage}{0.95\textwidth}
\begin{algorithm}[H]
\caption{Finding Augmenting Paths.}
\label{alg:MainAlgo}
\textbf{Input:} A stream of edges $\pi$ of a bipartite graph $G =(A,B,E)$, a maximal matching $M$ in $G$, $p \in (0, 1]$ and $d \in \mathbb{N}^+$.
\begin{algorithmic}[1]
\State Let $M' \subseteq M$ be a random subset such that $\forall e \in M$, $\Pr[e \in M'] = p$
\State Let $G_L' = G[A(M') \cup \overline{B(M)}]$ and $G_R' = G[\overline{A(M)} \cup B(M')]$ 
\State Denote by $\pi_{G_L'}$ ($\pi_{G_R'}$) the substream of $\pi$ of edges of $G_L'$ ($G_R'$, respectively)
\State $S_L \leftarrow \textsc{Greedy}_d(\pi_{G_L'})$ such that $\deg_{S_L}(a) \le 1$, for every $a \in A(M')$, and $\deg_{S_L}(b) \le d$, for every $b \in \overline{B(M)}$
\State $S_R \leftarrow \textsc{Greedy}_d(\pi_{G_R'})$ such that $\deg_{S_R}(b) \le 1$, for every $b \in B(M')$, and $\deg_{S_R}(a) \le d$, for every $a \in \overline{A(M)}$
\State $ \mathcal{P} \leftarrow \{ ab', ab, a'b : ab' \in S_L, ab \in M', a'b \in S_R  \}$
\State \textbf{return} A largest subset $\mathcal{Q} \subseteq \mathcal{P}$ of vertex-disjoint paths.
\end{algorithmic}
\end{algorithm}
\end{minipage}
\end{figure}

\subsection{Analysis of Algorithm~\ref{alg:MainAlgo}}

The main task in analysing Algorithm~\ref{alg:MainAlgo} is to bound the sizes of $S_L$ and $S_R$ from below. A bound that holds in expectation for the case $d=1$ was previously proved by Konrad et al. \cite{kmm12}, and a high probability result (for $d=1$) was later obtained by Konrad \cite{k18}. We also first give a bound that holds in expectation (Lemma~\ref{lem:main_expec}), which is achieved by extending the original proof by Konrad et al. \cite{kmm12}. Our extension, however, is non-trivial as it requires a very different progress measure. Then, following Konrad \cite{k18}, we obtain a high probability version in Lemma~\ref{lem:main_highp}.

We also remark that Lemmas~\ref{lem:main_expec} and \ref{lem:main_highp} are stated in a more general context, however, it is not hard to see that they capture the situation of the computations of $S_L$ and $S_R$ in subgraphs $G_L'$ and $G_R'$, respectively.

\begin{lemma} \label{lem:main_expec}
Let $G = (A, B, E)$ be a bipartite graph, $\pi$ an arbitrarily ordered stream of its edges, $p \in (0,1]$, and $d \in \mathbb{N}^+$. Let $A' \subseteq A$ be a random subset such that $\forall a \in A$, $\Pr[a \in A'] = p$, and let $d$ be the degree bound of the $B$ vertices. Let $H = G[A' \cup B]$ and denote by $\pi_H$ the substream of $\pi$ consisting of the edges in $H$. Then,
\[\mathbb{E}_{A'}[|\textsc{Greedy}_d(\pi_H)|] \geq \frac{d }{d + p} \cdot p \cdot \mu(G) \ .\]
\end{lemma}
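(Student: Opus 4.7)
The plan is to fix a maximum matching $M^\ast = \{(a_i, b_i) : i \in [\mu(G)]\}$ of $G$ and, for each $a \in A$, introduce the indicators $Z_a := \mathbf{1}[a \in A', \deg_S(a) = 1]$ and $\bar{Z}_a := \mathbf{1}[a \in A', \deg_S(a) = 0]$. Since each edge of $S$ is associated with a unique matched $A'$-vertex, $|S| = \sum_{a \in A} Z_a \geq \sum_i Z_{a_i}$, and using $Z_{a_i} + \bar{Z}_{a_i} = \mathbf{1}[a_i \in A']$ we obtain $\mathbb{E}[|S|] \geq \sum_i \mathbb{E}[Z_{a_i}] = p \cdot \mu(G) - \sum_i \mathbb{E}[\bar{Z}_{a_i}]$. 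The whole proof therefore reduces to establishing $\sum_i \mathbb{E}[\bar{Z}_{a_i}] \leq \frac{p}{d} \cdot \mathbb{E}[|S|]$, which rearranges to the claimed $\mathbb{E}[|S|] \geq \frac{d}{d+p} \cdot p \cdot \mu(G)$.

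For the charging, I compare the run of $\textsc{Greedy}_d$ on $\pi_H$ with a coupled run of $\textsc{Greedy}_d$ on the substream of $\pi$ restricted to edges whose $A$-endpoint lies in $A' \setminus \{a_i\}$, and I denote its output by $S^{(-i)}$. When $\bar{Z}_{a_i} = 1$, the original run silently rejects every edge incident to $a_i$ (the only edges in which the two streams differ), so the two runs evolve identically: $S = S^{(-i)}$ throughout, and in particular $\deg_{S^{(-i)}}(b_i) = d$ at the moment $(a_i, b_i)$ is rejected. Because $S^{(-i)}$ is a function of $A' \setminus \{a_i\}$ alone, it is independent of the event $\{a_i \in A'\}$, giving $\mathbb{E}[\bar{Z}_{a_i}] \leq p \cdot \Pr[\deg_{S^{(-i)}}(b_i) = d]$.

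The crux is then the pointwise monotonicity $\deg_S(b) \geq \deg_{S^{(-i)}}(b)$ for every $b \in B$, which I prove by induction on the number of stream edges processed. Edges of the form $(a_i, \cdot)$ appear only in $\pi_H$ and can only weakly increase $B$-degrees in $S$. The only delicate case, when processing an edge $(a_j, b)$ with $a_j \neq a_i$, is that the coupled run accepts while the unmodified run rejects. If the rejection were due to $a_j$ already being matched in $S$, tracing the earlier matching edge $(a_j, b') \in S$ back to its processing step would give: at that step, the coupled run also had $\deg(a_j) = 0$ and, by the inductive $B$-invariant applied at $b'$, $\deg(b') < d$, so the coupled run would also have added $(a_j, b')$, contradicting that $a_j$ is unmatched in $S^{(-i)}$ at the current step. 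Hence the rejection must stem from $\deg_S(b) = d > \deg_{S^{(-i)}}(b)$, which keeps the invariant intact on $b$ as both degrees become $d$ after the step. Carefully managing this inductive case analysis is the main subtlety.

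Given the monotonicity, summing the pointwise inequality $d \cdot \mathbf{1}[\deg_{S^{(-i)}}(b_i) = d] \leq \deg_{S^{(-i)}}(b_i) \leq \deg_S(b_i)$ over the distinct $b_i$'s and using $\sum_i \deg_S(b_i) \leq \sum_{b \in B} \deg_S(b) = |S|$ yields $d \cdot \sum_i \Pr[\deg_{S^{(-i)}}(b_i) = d] \leq \mathbb{E}[|S|]$. Combining with the previous paragraph gives $\sum_i \mathbb{E}[\bar{Z}_{a_i}] \leq \frac{p}{d} \mathbb{E}[|S|]$, closing the argument.
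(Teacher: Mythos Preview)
Your proof is correct and takes a genuinely different route from the paper's. The paper sets up a ``damage game'': each edge picked by $\textsc{Greedy}_d$ deals at most $1+\tfrac{1}{d}$ damage to the incident edges of $M^\ast_H$, and the principle of deferred decisions shows the \emph{expected} damage per step is at most $1+\tfrac{p}{d}$; since the total damage equals $|M^\ast_H|$, Wald's equation for a stopping time yields $\mathbb{E}[|M^\ast_H|]\le(1+\tfrac{p}{d})\,\mathbb{E}[|S|]$, which rearranges to the lemma. Your argument replaces this online/stopping-time analysis by a leave-one-out coupling: you compare the run on $A'$ with the run on $A'\setminus\{a_i\}$, observe that on the event $\bar Z_{a_i}=1$ the two runs coincide and hence $\deg_{S^{(-i)}}(b_i)=d$, and then exploit the independence of $S^{(-i)}$ from $\{a_i\in A'\}$ together with the structural monotonicity $\deg_S(b)\ge\deg_{S^{(-i)}}(b)$ to bound $\sum_i\mathbb{E}[\bar Z_{a_i}]\le\tfrac{p}{d}\,\mathbb{E}[|S|]$. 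Both proofs ultimately express ``each blocked optimal edge at $b_i$ is paid for by $d$ semi-matching edges at $b_i$, and the $a_i$ whose optimal edge is blocked was sampled with probability $p$,'' but the paper packages this via Wald's equation while you package it via coupling and a pointwise degree-monotonicity invariant. Your approach has the advantage of isolating a reusable combinatorial fact (adding a vertex to $A'$ can only increase every $B$-degree in the $\textsc{Greedy}_d$ output) and avoiding the stopping-time machinery; the paper's approach is perhaps more directly amenable to variants where the per-step damage has a more intricate distribution. One small remark: both arguments rely on the events $\{a\in A'\}$ being mutually independent (you use it for the independence of $S^{(-i)}$ from $\{a_i\in A'\}$, the paper uses it for deferred decisions), which is implicit in the lemma statement.
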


\begin{proof}
Let $M^*$ be a fixed maximum matching in $G$ and let $M^*_H := \{ ab \in M^* : a \in A' \}$ be the subset of edges incident to $A'$. The goal is to bound the expected number of edges in $M^*_H$ blocked by the semi-matching returned by $\textsc{Greedy}_d(\pi_H)$.

\subparagraph*{\emph{Game Setup.}} Consider the following game: On selection of an edge by $\textsc{Greedy}_d(\pi_H)$, the edge \textit{attacks} the (at most two) incident edges of $M^*_H$ and deals damage to them. Initially, the damage of every edge in $M^*_H$ is $0$, and the maximum damage of each such edge is $1$. A damage below $1$ means that the edge could still be selected by the algorithm. A damage equal to $1$ implies that the edge can no longer be selected.

Denote by $S_i$ the first $i$ edges selected by $\textsc{Greedy}_d(\pi_H)$ and let $ab$ be the $(i + 1)^{\text{th}}$ edge selected.  The way damage is dealt is as follows:
\begin{itemize}
\item If there is an edge $a'b \in M^*_H$ such that $a' \notin A(S_{i+1})$ then attack edge $a'b$ by adding $\frac{1}{d}$ damage to it; 
\item If there is an edge $ab' \in M^*_H$ then attack edge $ab'$ by adding $1 - \frac{\deg_{S_i}(b')}{d}$ damage to it, maxing out the damage to $1$.
\end{itemize}

Observe that the maximum damage which any edge selected by $\textsc{Greedy}_d(\pi_H)$ can inflict is $1+\frac{1}{d}$ (applying both cases to the two incident optimal edges). Furthermore, observe that the maximum damage which every edge in $M^*_H$ can receive is $1$, and, indeed, at the end of the algorithm, every edge in $M^*_H$ has damage $1$.

\subparagraph*{\emph{Applying Wald's Equation.}} Denote by $s$ the cardinality of the semi-matching computed by $\textsc{Greedy}_d(\pi_H)$ and let $X_1, X_2, \dots, X_s$ be the sequence of edges selected. Define the random variable $Y_i$ to be the damage dealt by edge $X_i$. Let $T$ be the smallest $i$ such that $\sum_{j=1}^i Y_j = |M^*_H|$ holds. Observe that $T$ is a random stopping time. To apply the version of Wald's Equation presented in Lemma~\ref{lem:walds_eq}, we need to show that $\mathbb{E}[T]$ is finite and find a value $\tau$ such that, for all $i \leq T$, $\mathbb{E}[Y_i] \leq \tau$ holds:

The expected stopping time $\mathbb{E}[T]$ is finite since $T \le s$ always holds by the end of the algorithm, i.e., the total damage dealt is $|M^*_H|$. Finding $\tau$ is less obvious. By definition, the damage $Y_i$ dealt by any edge $X_i$ is either $0, \frac{1}{d}, \dots, 1$ or $1 + \frac{1}{d}$. Hence, we obtain the following:
\begin{align*}
\mathbb{E}[Y_i] & \le \Pr \left[ Y_i \le 1 \right] \cdot 1 + \underbrace{\Pr \left[ Y_i = 1 + \frac{1}{d} \right]}_q \cdot (1+\frac{1}{d}) = (1-q) \cdot 1  + q \cdot (1+\frac{1}{d}) =  1 + \frac{q}{d} \ .
\end{align*}
It remains to bound $ \Pr [ Y_i = 1 + \frac{1}{d}](=q)$. Let $X_i = ab$.  Then, by definition of the game, the event $Y_i = 1 + \frac{1}{d}$ only happens if there exists an edge $a'b \in M^*_H$ such that $a' \notin A(S_i)$. In this case, $ab$ inflicts a damage of $1$ on edge $a'b$. However, observe that since $a' \notin A(S_i)$, the random choice as to whether $a' \in A'$ and thus whether $a'b \in M^*_H$ had not needed to occur yet (principle of deferred decision). Hence, we obtain: \[\Pr[Y_i = 1 + \frac{1}{d}] \le \Pr[a' \in A'] = p \ .\]

Having shown that $\mathbb{E}[T]$ is finite and $\mathbb{E}[Y_i] \leq 1+\frac{p}{d}$ for all $i \leq T$, we can apply Wald's Equation (Lemma~\ref{lem:walds_eq}) and we obtain $\mathbb{E}[\sum_{j=1}^T Y_i] \leq (1 + \frac{p}{d}) \mathbb{E}[T]$. Finally, since $\mathbb{E}[\sum_{j=1}^T Y_i] = \mathbb{E}[|M^*_H|] = p \cdot \mu(G)$ and $T \leq s = |\textsc{Greedy}_d(\pi_H)|$, it follows that 
\[\mathbb{E}[\sum_{j=1}^T Y_i] =  p \cdot \mu(G) \leq (1 + \frac{p}{d}) \cdot \mathbb{E}[T] \leq  (1 + \frac{p}{d}) \cdot \mathbb{E}[|\textsc{Greedy}_d(\pi_H)] \ ,\]
which implies the result.
\end{proof}

Next, we follow the approach by Konrad \cite{k18} to strengthen Lemma~\ref{lem:main_expec} and obtain the following high probability result (see Appendix~\ref{app:highprob} for the proof):

\begin{restatable}{lemma}{mainhighp} \label{lem:main_highp}
Let $G = (A, B, E)$ be a bipartite graph, $\pi$ be any arbitrary stream of its edges, $p \in (0,1]$ and $d \in \mathbb{N}^+$. Let $A' \subseteq A$ be a random subset such that $\forall a \in A$, $\Pr[a \in A'] = p$, let $d$ be the degree bound of the $B$ vertices and let $H = G[A' \cup B]$. Then, the following holds with probability at least $1 - 2\mu(G)^{-18}$:
\[|\textsc{Greedy}_d(\pi_H)| \geq \frac{d }{d + p} \cdot p \cdot \mu(G) - o(\mu(G)).\]
\end{restatable}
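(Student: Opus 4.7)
The plan is to upgrade the expectation estimate of Lemma~\ref{lem:main_expec} to a high-probability statement via McDiarmid's bounded-differences inequality. Write $Z := |\textsc{Greedy}_d(\pi_H)|$ as a deterministic function $Z = f\bigl((X_a)_{a \in A}\bigr)$ of the independent $\mathrm{Bernoulli}(p)$ indicators $X_a := \mathbf{1}[a \in A']$: once the sample $A'$ is fixed, both the substream $\pi_H$ and the execution of $\textsc{Greedy}_d$ are determined, so $f$ is well defined on $\{0,1\}^{|A|}$.

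The technical heart is to show that $f$ is $1$-Lipschitz in the Hamming metric, i.e., toggling a single $X_a$ changes $Z$ by at most one. I would establish this by a stream-replay coupling between the two executions (one with $X_a=1$, one with $X_a=0$), tracking the output semi-matchings $S$ and $S'$ edge by edge in the order of $\pi$. The two runs agree until the first edge incident to $a$ arrives; from that moment, the perturbation propagates as a single alternating ``token''. Once a $B$-vertex $b^\star$ reaches its cap $d$ in $S$ while still at $d-1$ in $S'$, the next edge $(a'',b^\star)$ is rejected by $S$ yet accepted by $S'$, shifting the asymmetry from $b^\star$ to $a''$; a subsequent edge at $a''$ then shifts it to a new $B$-vertex, and so on. At each step exactly one edge in the symmetric difference $S \oplus S'$ flips side, so the signed difference $|S|-|S'|$ only ever oscillates between $0$ and $1$, yielding $||S|-|S'|| \le 1$. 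This extends the $d=1$ alternating-path argument of Konrad~\cite{k18}.

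With Lipschitz constant $c=1$ in hand, McDiarmid's inequality gives, for every $t > 0$,
\[
\Pr\bigl[Z \le \mathbb{E}[Z] - t\bigr] \le \exp\!\Bigl(-\frac{2t^{2}}{|A|}\Bigr).
\]
Choosing $t = \Theta\bigl(\sqrt{|A|\log\mu(G)}\bigr)$, which is $o(\mu(G))$ in the intended regime (in the application to Algorithm~\ref{alg:MainAlgo} one has $|A| = O(\mu(G))$), drives the failure probability below $2\mu(G)^{-18}$. Combined with Lemma~\ref{lem:main_expec}, this yields
\[
|\textsc{Greedy}_d(\pi_H)| \ge \tfrac{d}{d+p}\, p\, \mu(G) - o(\mu(G))
\]
with the stated probability.

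The main obstacle is the bounded-differences step itself. For $d = 1$ the symmetric difference of two greedy runs differing in one coin is a single alternating path, so $c = 1$ is classical. For $d \ge 2$, the semi-matching constraint permits several accepted edges at a single $B$-vertex, so the perturbation chain looks, a priori, as though it could branch. The key observation is that only the \emph{marginal} accepted edge at any vertex can ever flip: a $B$-vertex at capacity $d$ in $S$ sits at $d-1$ in $S'$, so exactly one future edge can change status there, and symmetrically at each $A$-vertex. Turning this into a rigorous edge-ordered induction, and certifying that the token never spawns parallel branches, is the most delicate part of the argument; the rest (plugging $c = 1$ into McDiarmid and tuning $t$ to meet the $\mu(G)^{-18}$ probability bound) is routine.
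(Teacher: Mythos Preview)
Your overall strategy—prove a unit Lipschitz bound via an alternating-path (``token'') argument and then apply a concentration inequality to the expectation from Lemma~\ref{lem:main_expec}—is exactly the paper's strategy, and your token argument is correct: the symmetric difference of the two runs is always a single alternating path, so $\bigl||S|-|S'|\bigr|\le 1$. This is precisely what the paper invokes (``$S_1\oplus S_2$ consists of at most one alternating path'').

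Where you diverge is in the choice of filtration, and this creates a genuine gap relative to the lemma \emph{as stated}. You apply McDiarmid to the $|A|$ independent coins $(X_a)_{a\in A}$, so your deviation term is $\Theta\bigl(\sqrt{|A|\log\mu(G)}\bigr)$, which is $o(\mu(G))$ only under an additional hypothesis like $|A|=O(\mu(G))$. The lemma, however, is stated for arbitrary bipartite $G$ with no assumed relation between $|A|$ and $\mu(G)$, so your argument does not prove it in full generality (you note this yourself). The paper sidesteps the issue by building the Doob martingale not on the input coins but on the sequence $X_1,\dots,X_s$ of \emph{edges selected} by $\textsc{Greedy}_d$, and applying Azuma--Hoeffding. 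Since any output semi-matching $S$ contains a matching of size $|S|/d$, one has $s\le d\cdot\mu(G)$ steps regardless of $|A|$, giving a deviation of $O\bigl(\sqrt{d\,\mu(G)\log\mu(G)}\bigr)=o(\mu(G))$ unconditionally.

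If you want to stay with McDiarmid, there is a clean repair: fix a maximum matching $M^*$, condition on the coins $(X_a)_{a\in A\setminus A(M^*)}$, and expose only the $\mu(G)$ coins in $A(M^*)$. The damage argument in the proof of Lemma~\ref{lem:main_expec} only ever queries membership in $A'$ for vertices of $A(M^*)$ (via the deferred decision on $a'$ with $a'b\in M^*$), so the conditional expectation still satisfies the same lower bound; your Lipschitz constant $1$ is unchanged; and now the sum of squared differences is $\mu(G)$ rather than $|A|$, matching the paper's $o(\mu(G))$ deviation.
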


Equipped with Lemma~\ref{lem:main_highp}, we are now ready to bound the number of augmenting paths found by Algorithm~\ref{alg:MainAlgo}. 

\begin{restatable}{lemma}{mainalgoguarantee}  \label{lem:main_algo_guarantee}
Suppose that $|M| = (\frac{1}{2}+\epsilon)\mu(G)$. Then, 
with probability $1 - \mu(G)^{-16}$, the number of vertex-disjoint $3$-augmenting paths $|\mathcal{Q}|$ found by
Algorithm~\ref{alg:MainAlgo} is at least:
\[|\mathcal{Q}| \ge (\frac{1 - 2\epsilon}{d + p} - \frac{1 + 2\epsilon}{2d}) \cdot p \cdot \mu(G) - o(\mu(G)) \ .\] 
\end{restatable}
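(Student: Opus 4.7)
The plan is to bound $|\mathcal{Q}|$ in three stages: lower-bound $|S_L|$ and $|S_R|$ via Lemma~\ref{lem:main_highp}, count ``good'' middle edges via inclusion--exclusion, and then convert good middle edges into vertex-disjoint paths through a bipartite edge-coloring argument that will cost a factor of $d$.

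First I will establish the auxiliary bound $\mu(G_L), \mu(G_R) \ge (1/2 - \epsilon)\mu(G)$, where $G_L = G[A(M) \cup \overline{B(M)}]$ and $G_R = G[\overline{A(M)} \cup B(M)]$. Fix a maximum matching $M^*$ and look at $M \oplus M^*$: since $|M^*| - |M| = (1/2 - \epsilon)\mu(G)$, there are at least $(1/2 - \epsilon)\mu(G)$ augmenting paths, each of odd length with endpoints on opposite sides of the bipartition. The $M^*$-edge at the unmatched $B$-endpoint of such a path lies in $G_L$ and the $M^*$-edge at the unmatched $A$-endpoint lies in $G_R$; collecting these over all augmenting paths gives submatchings of $M^*$ of the required size in $G_L$ and $G_R$.

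Next I will apply Lemma~\ref{lem:main_highp} to $G_L$ with $A(M') \subseteq A(M)$ as the random subset (each $a$ kept independently with probability $p$) and degree bound $d$ on the $\overline{B(M)}$ side, yielding $|S_L| \ge \frac{pd}{d+p}\mu(G_L) - o(\mu(G))$ with probability at least $1 - 2\mu(G_L)^{-18}$, and a symmetric bound for $|S_R|$. A Chernoff bound gives $|M'| \le p|M| + o(\mu(G))$ with high probability. Because $\deg_{S_L}(a) \le 1$ for every $a \in A(M')$, exactly $|S_L|$ edges $ab \in M'$ have their $A$-endpoint matched in $S_L$, and symmetrically $|S_R|$ have their $B$-endpoint matched in $S_R$. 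Inclusion--exclusion then gives at least $|S_L| + |S_R| - |M'|$ ``good'' middle edges (both endpoints matched), and the degree-$1$ constraints on $S_L, S_R$ force each such $ab$ to determine a unique path in $\mathcal{P}$; hence $|\mathcal{P}| \ge |S_L| + |S_R| - |M'|$.

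Finally, to pass from $|\mathcal{P}|$ to the vertex-disjoint count $|\mathcal{Q}|$, I will note that distinct paths in $\mathcal{P}$ use distinct middle edges and therefore can only collide through their outer vertices $a' \in \overline{A(M)}$, $b' \in \overline{B(M)}$. Associating each path with its outer pair $(a', b')$ realizes $\mathcal{P}$ as the edge set of a bipartite multigraph $H$ on $\overline{A(M)} \cup \overline{B(M)}$ whose maximum degree is at most $d$ (by the degree-$d$ constraints on the outer endpoints of $S_L$ and $S_R$). König's edge-coloring theorem for bipartite multigraphs decomposes $E(H)$ into $d$ matchings, so $H$ admits a matching of size at least $|\mathcal{P}|/d$, which lifts back to $|\mathcal{P}|/d$ vertex-disjoint paths. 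Substituting yields
\begin{align*}
|\mathcal{Q}| \ge \frac{|S_L|+|S_R|-|M'|}{d} \ge \left(\frac{1-2\epsilon}{d+p} - \frac{1+2\epsilon}{2d}\right) p \, \mu(G) - o(\mu(G)) ,
\end{align*}
and a union bound over the $O(1)$ high-probability events from Lemma~\ref{lem:main_highp} and Chernoff keeps the overall failure probability within the claimed $\mu(G)^{-16}$ budget. The main obstacle will be this last edge-coloring step: recognizing that the natural packing problem is max matching in a bipartite multigraph of maximum degree $d$ is exactly what produces the extra factor $d$ in both denominators of the target bound, and, unlike the $d=1$ setting of Konrad's original analysis, this loss is unavoidable for general $d$.
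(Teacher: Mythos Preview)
Your proposal is correct and follows essentially the same approach as the paper's proof: both use the augmenting-path count to lower-bound $\mu(G_L)+\mu(G_R)$, apply Lemma~\ref{lem:main_highp} to each side, use the inclusion--exclusion estimate $|\mathcal{P}| \ge |S_L|+|S_R|-|M'|$, and then lose a factor $d$ to extract vertex-disjoint paths. The only cosmetic differences are that you state the individual bounds $\mu(G_L),\mu(G_R) \ge (\tfrac12-\epsilon)\mu(G)$ (the paper states only their sum, which is all that is needed), and that you spell out the $|\mathcal{Q}| \ge |\mathcal{P}|/d$ step via K\"onig's edge-coloring theorem for bipartite multigraphs, whereas the paper simply cites Lemma~6 of Esfandiari et al.\ for this fact.
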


\begin{proof}
Let $M^*$ be a fixed maximum matching in $G$.
In this proof, we will refer to the quantities used by Algorithm~\ref{alg:MainAlgo}. First, using a Chernoff bound for independent Poisson trials, we see that $|M'| = p \cdot |M| \pm O(\sqrt{|M| \ln |M|})$ with probability at least $1 - |M|^{-C}$ for an arbitrarily large constant $C$.

Consider the subgraphs $G_L = G[A(M) \cup \overline{B(M)}]$ and $G_R = G[\overline{A(M)} \cup B(M)]$. $M \oplus M^*$ contains $(\frac{1}{2} - \epsilon)\mu(G)$ vertex-disjoint augmenting paths where each path starts and ends with an edge in $G_L \cup G_R$. This implies that
\begin{align}\label{eq:sum_musubG}
\mu(G_L) + \mu(G_R) & \ge  2(\frac{1}{2} - \epsilon)\mu(G) = (1 - 2 \epsilon) \mu(G) \  . 
\end{align}

Following Konrad \cite{k18}, we will next argue the following:
\begin{align}
|\mathcal{P}| \ge |S_L| + |S_R| - |M'| \ . \label{eqn:p}
\end{align}
Observe that there are $|M'|-|S_L|$ vertices of $|M'|$ that are not incident to an edge in $S_L$, and similarly, $|M'|-|S_R|$ vertices of $|M'|$ that are not incident to an edge in $S_R$. Hence, there are at least $|M'| - (|M'|-|S_L|) - (|M'|-|S_R|) = |S_L| + |S_R| - |M'|$ edges of $|M'|$ that are incident to both an edge from $S_L$ and $S_R$. We thus obtain that there are at least $|\mathcal{P}| \ge |S_L| + |S_R| - |M'|$ $3$-augmenting paths.

Next, Esfandiari et al. (Lemma 6 in \cite{ehm16}) consider a similar structure to $\mathcal{P}$ and argue that there is at least a $d$-fraction of augmenting paths in $\mathcal{P}$ that are vertex-disjoint, and, hence, 
\begin{align}
|\mathcal{Q}| \ge \frac{1}{d} |\mathcal{P}| \ . \label{eqn:q}
\end{align}

Using Lemma~\ref{lem:main_highp} and  Inequalities~\ref{eq:sum_musubG}, \ref{eqn:p}, and \ref{eqn:q}, we obtain:
\begin{align*}
|\mathcal{Q}| &\geq \frac{1}{d}(|S_L| + |S_R| - |M'|) \\
&\geq \frac{1}{d}(\frac{d }{d + p} \cdot p \cdot (1 - 2\epsilon)\mu(G) - o(\mu(G)) - p \cdot (\frac{1}{2} + \epsilon)\mu(G) - O(\sqrt{\mu(G) \ln \mu(G)})) \\
&= (\frac{1 - 2\epsilon}{d + p} - \frac{1 + 2\epsilon}{2d}) \cdot p \cdot \mu(G) - o(\mu(G)).
\end{align*}
Using the union bound, the error of the algorithm is bounded by $|M|^{-C} + 2\mu(G)^{-18} \leq \mu(G)^{-16}$.
\end{proof}

We are now ready to state our main algorithmic result:

\begin{manualtheorem}{\ref{thm:ub}}
For every $p \in (0,1]$ and every integral $d \ge 1$, there is a two-pass semi-streaming algorithm for \textsf{MBM} with approximation factor 
\begin{align*}
  \begin{cases}
   \frac{1}{2} + (\frac{1}{d+p} - \frac{1}{2d}) \cdot p - o(1), & \mbox{ if } p \le d(\sqrt{2}-1) \\
   \frac{1}{2} + \frac{d-p}{6d+2p} - o(1), & \ \mbox{ otherwise} \ ,
  \end{cases}
 \end{align*}
that succeeds with high probability (in $\mu(G)$, where $G$ is the input graph). The settings ($d=1, p=\sqrt{2}-1$) and ($d=2, p=2(\sqrt{2}-1)$) maximize the approximation factor to $2-\sqrt{2}- o(1)$.
\end{manualtheorem}

\begin{proof}
Let $M$ be a maximal matching such that $|M| = (\frac{1}{2}+\epsilon)\mu(G)$, for some $0 \le \epsilon \le \frac{1}{2}$ and some bipartite graph $G=(A, B, E)$ with a stream $\pi$ of its edges. Let $\mathcal{Q}$ be the disjoint augmenting paths found by Algorithm~\ref{alg:MainAlgo} on input $\pi, M, p$ and $d$. Then, augmenting $M$ with $\mathcal{Q}$ yields a matching of size $|M| + |\mathcal{Q}|$. By Lemma~\ref{lem:main_algo_guarantee}, the following inequality holds with high probability:
\begin{align}
|M| + |\mathcal{Q}| \geq (\frac{1}{2} + \epsilon)\mu(G) + (\frac{1 - 2\epsilon}{d + p} - \frac{1 + 2\epsilon}{2d}) \cdot p \cdot \mu(G) - o(\mu(G)). \label{eqn:392}
\end{align}

We distinguish two cases:
\begin{enumerate}
 \item If $p \le d(\sqrt{2}-1)$ then $\epsilon = 0$ minimizes the RHS of Inequality~\ref{eqn:392}, and we obtain the claimed bound by plugging the value $\epsilon = 0$ into the inequality.
 \item If $p \ge d(\sqrt{2}-1)$ (only possible if $d \in \{1, 2 \}$) then $\epsilon = \frac{d-p}{6d+2p}$ minimizes the RHS of Inequality~\ref{eqn:392}, and we obtain the claimed bound by plugging the value $\epsilon = \frac{d-p}{6d+2p}$ into the inequality.
\end{enumerate}
It can be seen that, for a fixed $d$, the maximum is obtained if $p = \min \{d \sqrt{2} - d, 1 \}$, and the values $d \in \{1, 2\}$ yield the claimed bound of $2-\sqrt{2} - o(1)$ (see Figure~\ref{fig:meta_alg} in Section~\ref{sec:intro}).
\end{proof}

\subsection{Optimality of the Analysis} \label{sec:instance-optimality}
We will show now that our analysis of Algorithm~\ref{alg:MainAlgo} is best possible. To this end, we define a worst-case input graph $G$ in Figure~\ref{fig:hard-instance}, and prove in Theorem~\ref{thm:opt_analysis} that Algorithm~\ref{alg:MainAlgo} does not perform better on $G$ than predicted by our analysis. See Figure~\ref{fig:opt_graph} for an illustration.

\begin{figure}[t]
\centering
\fbox{
\begin{minipage}{0.95\textwidth}
\begin{enumerate}
\item Let 
$A_{\text{in}} = \{a_{\text{in}}^1, a_{\text{in}}^2, \dots, a_{\text{in}}^N \}$, 
$A_{\text{out}} = \{a_{\text{out}}^1, \dots, a_{\text{out}}^N \}$, 
$B_{\text{in}} = \{b_{\text{in}}^1, \dots, b_{\text{in}}^N \}$, and
$B_{\text{out}} = \{b_{\text{out}}^1, \dots, b_{\text{out}}^N \}$ be sets of vertices, for some integer $N$.
\item Let $M = \{ a_{\text{in}}^i b_{\text{in}}^i \ : \ 1 \le i \le N \}$ be a perfect matching between $A_{\text{in}}$ and $B_{\text{in}}$. Let $G_L=(A_{\text{in}}, B_{\text{out}}, E_L)$ be a semi-complete graph
such that $a_{\text{in}}^i b_{\text{out}}^j \in E_L \Leftrightarrow i \geq j$, and let $G_R=(A_{\text{out}}, B_{\text{in}}, E_R)$ be a semi-complete graph such that $a_{\text{out}}^i b_{\text{in}}^j \in E_R \Leftrightarrow i \geq j$. 
\item Our bipartite hard instance graph is defined as $G = (A_{\text{in}} \cup A_{\text{out}}, B_{\text{in}} \cup B_{\text{out}}, M \cup E_L \cup E_R)$ and has $n = 4N$ vertices. 
\item Finally, let $\pi$ be a stream of its edges where the edges of $M$
arrive first followed by the edges $E_L$ and $E_R$. The edges in $E_L$ are ordered so that $a_{\text{in}}^i b_{\text{out}}^j$ arrives before $a_{\text{in}}^{i'} b_{\text{out}}^{j'}$ only if $i > i'$, or $i = i'$ and $j < j'$. Similarly, the edges in $E_R$ are ordered so that $a_{\text{out}}^i b_{\text{in}}^j$ arrives before $a_{\text{out}}^{i'} b_{\text{in}}^{j'}$ only if $i > i'$, or $i = i'$ and $j < j'$.
\end{enumerate}
\end{minipage}
}
\caption{\label{fig:hard-instance}Hard input instance $G$ for Algorithm~\ref{alg:MainAlgo}.}
\end{figure}

\begin{figure}[t]
\centering
\includegraphics[width=0.9\textwidth]{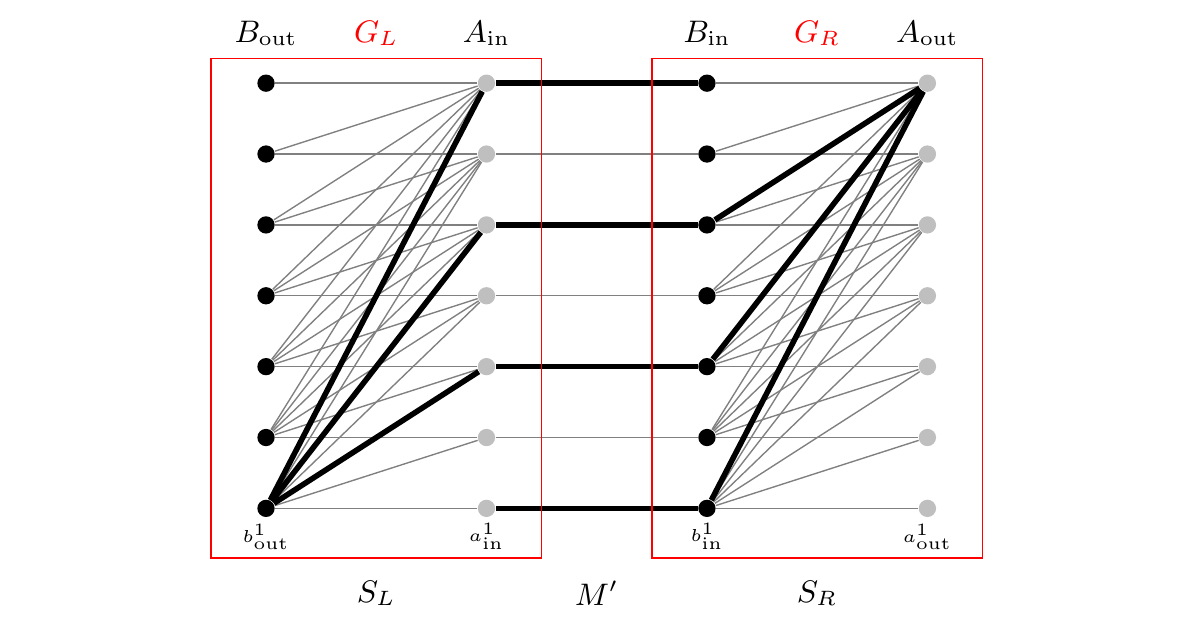}
\caption{Algorithm~\ref{alg:MainAlgo} on a hard input instance with $N=7$, $d=3$ and $p = 0.5$.} \label{fig:opt_graph}
\end{figure}

Observe that $M$ is a maximal matching in $G$, and if we run \textsc{Greedy} in the first pass on $\pi$ then $M$ would be returned. Let $M^*_L = \{ a_{\text{in}}^i b_{\text{out}}^i \ : \ 1 \le i \le N \}$ and $M^*_R = \{ a_{\text{out}}^i b_{\text{in}}^i \ : \ 1 \le i \le N \}$. Then, $M^*_L$ is a perfect matching in $G_L$, $M^*_R$ is a perfect matching in $G_R$, and $M^*_L \cup M^*_R$ is a perfect matching in $G$.

\begin{theorem} \label{thm:opt_analysis}
 Algorithm~\ref{alg:MainAlgo} with parameters $d \ge 1$ and $0 < p \le 1$ on input $G$ received via stream $\pi$ and maximal matching $M$ finds at most 
\[\left( (\frac{1}{d+p} - \frac{1}{2d}) \cdot p + o(1) \right) \mu(G) \]
augmenting paths with high probability. This renders our analysis of Algorithm~\ref{alg:MainAlgo} best possible when $p \le d \sqrt{2} - d$.
\end{theorem}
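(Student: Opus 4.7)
The plan is to explicitly trace the execution of $\textsc{Greedy}_d$ on the substreams $\pi_{G_L'}$ and $\pi_{G_R'}$, use the special stream order from Figure~\ref{fig:hard-instance} to pin down the structure of $S_L$ and $S_R$, and then argue that the resulting set $\mathcal{P}$ of $3$-augmenting paths is concentrated on very few distinct $b_{\text{out}}^j$-vertices, so that no large vertex-disjoint subset $\mathcal{Q}$ can be extracted. Let $X = \{ i \in [N] : a_{\text{in}}^i \in A(M') \}$, so that $|X| = pN \pm o(N)$ by a Chernoff bound, and write $j_1 < j_2 < \cdots < j_m$ for the elements of $X$ in increasing order. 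A standard Chernoff argument applied simultaneously to every prefix of $[N]$ shows that, with high probability, $j_k = k/p \pm o(N)$ uniformly in $k$.

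Next, I analyse $S_L = \textsc{Greedy}_d(\pi_{G_L'})$. The substream $\pi_{G_L'}$ presents the sampled $a_{\text{in}}^i$'s in \emph{decreasing} order of $i$, and for each such $i$ the edges to $b_{\text{out}}^1, b_{\text{out}}^2, \ldots, b_{\text{out}}^i$ in \emph{increasing} order of $j$. A simple induction on $k$ then shows that the $k$-th sampled vertex processed, having index $j_{m-k+1}$, is matched to $b_{\text{out}}^{\lceil k/d \rceil}$ provided the corresponding edge exists, i.e., $j_{m-k+1} \geq \lceil k/d \rceil$, and otherwise fails to match (as do all subsequent ones, since the constraint only tightens). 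Substituting $j_{m-k+1} \approx N - k/p$, this condition holds precisely for $k \leq \frac{dp}{d+p} N =: \alpha$, up to lower-order terms. Hence $S_L$ matches exactly the top $\alpha \pm o(N)$ elements of $X$, distributed in consecutive batches of size $d$ across $b_{\text{out}}^1, \ldots, b_{\text{out}}^{\lceil \alpha/d \rceil}$. A symmetric argument for $S_R$ on $\pi_{G_R'}$ (now $a_{\text{out}}^i$ are processed in decreasing order of $i$ and each greedily grabs the smallest unmatched $b_{\text{in}}^j$ with $j \in X$ and $j \leq i$) shows that $S_R$ matches exactly the bottom $\alpha \pm o(N)$ elements of $X$, distributed across $a_{\text{out}}^N, a_{\text{out}}^{N-1}, \ldots$ in batches of size $d$.

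A path in $\mathcal{P}$ centred at $a_{\text{in}}^i b_{\text{in}}^i \in M'$ exists iff $i$ is matched in \emph{both} $S_L$ and $S_R$, i.e., iff $i$ lies in the intersection of the top-$\alpha$ and bottom-$\alpha$ elements of $X$. This intersection has size $2\alpha - |X| \pm o(N) = \frac{p(d-p)}{d+p} N \pm o(N)$, and, crucially, the $S_L$-partner $b_{\text{out}}^r$ of any such $i$ must lie in the batch range $r \in [\lceil (|X|-\alpha+1)/d \rceil, \lceil \alpha/d \rceil]$, which contains only $\frac{2\alpha - |X|}{d} + O(1) = \frac{p(d-p)}{d(d+p)} N + o(N)$ distinct $b_{\text{out}}^r$-vertices. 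Since any vertex-disjoint subset $\mathcal{Q} \subseteq \mathcal{P}$ uses each $b_{\text{out}}^r$ at most once, a pigeonhole argument yields $|\mathcal{Q}| \leq \frac{p(d-p)}{d(d+p)} N + o(N)$. Using $\mu(G) = 2N$, this rewrites as $\left( \left(\frac{1}{d+p} - \frac{1}{2d}\right) p + o(1) \right) \mu(G)$, matching the claim.

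The main technical obstacle will be obtaining clean uniform high-probability control on the order statistics $j_k$ of $X$, so that the inductive identification of the $S_L$- and $S_R$-matched sets with the top/bottom $\alpha \pm o(N)$ elements of $X$ holds simultaneously for all $k$ up to the threshold; once that is in place, the rest of the argument is a deterministic consequence of the stream order fixed in Figure~\ref{fig:hard-instance} together with the pigeonhole bound on the distinct outer vertices that appear across all augmenting paths.
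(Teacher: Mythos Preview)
Your proposal is correct and follows essentially the same approach as the paper: both exploit the stream order of Figure~\ref{fig:hard-instance} to show that $S_L$ matches exactly the top $\approx\frac{dp}{d+p}N$ sampled indices and $S_R$ the bottom $\approx\frac{dp}{d+p}N$, then bound the intersection and apply a pigeonhole on the outer $b_{\text{out}}$-vertices. The only stylistic difference is that the paper argues via a monotonicity observation (if $a_{\text{in}}^i\in A(S_L)$ and $i<j$ then $a_{\text{in}}^j\in A(S_L)$) together with the counting inequality $|A(S_L)|\le d\,|B(S_L)|$ to locate the cutoff indices $i_{\min},i_{\max}$, whereas you explicitly identify the $S_L$-partner of the $k$-th processed sampled vertex as $b_{\text{out}}^{\lceil k/d\rceil}$; both routes use the same Chernoff-plus-union-bound uniform concentration (your order-statistic estimate $j_k=k/p\pm o(N)$ is exactly the paper's Claim~\ref{claim:opt} applied to prefixes).
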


\begin{proof}

In this proof, we will refer to the quantities used by Algorithm~\ref{alg:MainAlgo}, that is, $M'$ (the edges of $M$ subsampled with probability $p$), $S_L$ and $S_R$. 

We will use the following claim in our proof: 
\begin{claim}\label{claim:opt}
With high probability, for every pair $i,j \in [N]$ with $i \le j$, we have 
\begin{align*}
|\{ a_{\text{in}}^k b_{\text{in}}^k \in M' \ | \ i \le k \le  j\} | & = p \cdot (j - i) \pm o(N) \ .
\end{align*}
\end{claim}
\begin{proof}
This claim is easy to prove. Indeed, for any fixed $i,j \in [N]$ with $i \le j$, the statement above follows directly from the Chernoff bound. Using the union bound over all pairs $i,j \in [N]$, the claim follows.
\end{proof}
\noindent From now on, we condition on the event that the statement in Claim~\ref{claim:opt} holds.

Let $A_{\text{in}}' = A(M')$ and let $B_{\text{in}}' = B(M')$. We will first argue that, for two different vertices $a_{\text{in}}^i, a_{\text{in}}^j \in A_{\text{in}}'$ with $i < j$, if $a_{\text{in}}^i \in A(S_L)$ then $a_{\text{in}}^j \in A(S_L)$ also holds. 
Indeed, suppose that this was not the case. Let $b_{\text{out}}^k$ be the partner of $a_{\text{in}}^i$ in $S_L$. Observe that the edges $a_{\text{in}}^i b_{\text{out}}^k, a_{\text{in}}^j b_{\text{out}}^k \in E_L$, and, in particular, the edge $a_{\text{in}}^j b_{\text{out}}^k$ arrives before the edge $a_{\text{in}}^i b_{\text{out}}^k$ in $\pi$. Hence, edge $a_{\text{in}}^j b_{\text{out}}^k$ would have been selected, a contradiction. A similar argument holds for vertices $b_{\text{out}}^i, b_{\text{out}}^j \in B_\text{out}$ with $i > j$; if $\text{deg}_{S_L}(b_\text{out}^i) \geq 1$ then $\text{deg}_{S_L}(b_\text{out}^j) = d$.

Let $i_{\text{min}}$ be the smallest index such that $a_{\text{in}}^{i_{\text{min}}} \in A(S_L)$. We will now argue that $i_{\text{min}} \ge \frac{pN}{p+d} - o(N)$.
Observe that the vertices $A_{\text{in}}'$ are matched in order from the largest to smallest index, and each matched vertex in $A_{\text{in}}'$ is matched only once. The vertices in $B_{\text{out}}$ are matched from the smallest to largest index, and each matched vertex is matched $d$ times (except possibly the last such matched vertex). Consider the last edge $a_{\text{in}}^{i_{\text{min}}} b_{\text{out}}^q$ inserted into $S_L$. Then, $q \le i_{\text{min}}$, and, thus, $|B(S_L)| \le i_{\text{min}}$.
By Claim~\ref{claim:opt} (applied with $j = N$), we have $|A(S_L)| \ge p \cdot (N - i_{\text{min}}) - o(N)$ with high probability. 
Since $|A(S_L)|$ is matched to $B(S_L)$ in $S_L$, and each $B$-vertex is matched at most $d$ times, we obtain $|A(S_L)| \le d \cdot |B(S_L)|$, and, hence:
\[
  p \cdot (N - i_{\text{min}}) - o(N) \le |A(S_L)| \le  d \cdot |B(S_L)| \le d \cdot i_{\text{min}} \ ,
\]
which implies $i_{\text{min}} \ge \frac{pN}{p+d} - o(N)$.

Let $i_{\text{max}}$ be the largest index such that $b_{\text{in}}^{i_{\text{max}}} \in B(S_R)$. Using a similar argument as above, we see that 
$i_{\text{max}} \le \frac{dN}{p+d} + o(N)$. 

Let $M'' = \{  a_{\text{in}}^i b_{\text{in}}^i \in M' : i_{\text{min}} \le i \le i_{\text{max}} \}$ be the subset of augmentable edges, i.e., edges for which there exists a left wing in $S_L$ and a right wing in $S_R$. Then, by Claim~\ref{claim:opt}, we have
\[|M''| \le p \cdot \left( i_{\text{max}} - i_{\text{min}} \right) + o(N) \leq  \frac{p(d-p)N}{p+d} + o(N) \ . \] 
All but constantly many vertices in $A(M'')$ share the same neighbour in $S_L$ with $d-1$ other vertices of $A(M'')$. Hence, at most a $d$-fraction (plus up to the constantly many exceptions, which disappear in the $o(N)$ term) of $M''$ can be augmented simultaneously. Using $N = \frac{1}{2} \mu(G)$, we obtain the following bound on the number of edges that can be augmented simultaneously: 
\[\frac{1}{d} |M''| \le \frac{1}{d} \left( \frac{p(d-p)N}{p+d} + o(N) \right) =  \left( (\frac{1}{d+p} - \frac{1}{2d}) \cdot p + o(1) \right) \mu(G) \ . \]
\end{proof}

\section{Conclusion} \label{sec:conclusion}
In this paper, we studied the class of two-pass semi-streaming algorithms for \textsf{MBM} that solely compute a \textsc{Greedy} matching in the first pass. We showed that algorithms of this class cannot have an approximation ratio of $\frac{2}{3}+\epsilon$, for any $\epsilon > 0$. We also combined the two dominant techniques that have previously been used for designing such algorithms and discovered another algorithm that matches the state-of-the-art approximation factor of $2-\sqrt{2} \approx 0.58578$.

We conclude with two open problems. First, we are particularly interested in whether there exists a one-pass semi-streaming algorithm that is able to augment a maximal matching so as to yield an approximation ratio above $2-\sqrt{2}$. Second, is there a two-pass semi-streaming algorithm for \textsf{MBM} that improves on the approximation factor of $2-\sqrt{2}$ and operates differently in the first pass to the class of algorithms considered in this paper?
 
\pagebreak
\printbibliography[heading=bibintoc]
\pagebreak
\appendix

\section{Strengthening Lemma~\ref{lem:main_expec}} \label{app:highprob}

Following \cite{k18}, we use tail inequalities for martingales to strengthen Lemma~\ref{lem:main_expec} and give a high probability result. The proof of Lemma~\ref{lem:main_highp} uses the \textit{Azuma-Hoeffding's Inequality} \cite[Theorem~12.4]{mitzenmacher2005probability}:

\begin{lemma}[Azuma-Hoeffding's Inequality] \label{lem:Azuma-Hoeffding}
Let $Z_0, Z_1, ..., Z_n$ be a martingale such that $\forall k \geq 0$, $|Z_{k+1} - Z_k| \leq c_k$. Then, $\forall t \geq 0$ and any $\lambda > 0$,
\[Pr\left[|Z_t - Z_0| \geq \lambda \right] \leq 2\exp \left(\frac{-\lambda^2}{2\sum_{k=0}^{t-1}c_k^2} \right).\]
\end{lemma}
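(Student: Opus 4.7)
The plan is to prove the two-sided concentration bound via the exponential (Chernoff) moment method, reducing it to a one-sided tail bound and then iterating over martingale differences using the tower property.

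First I would reduce to the one-sided bound $\Pr[Z_t - Z_0 \geq \lambda] \leq \exp\!\bigl(-\lambda^2/(2C)\bigr)$, where $C := \sum_{k=0}^{t-1} c_k^2$. Since $(-Z_k)_{k \geq 0}$ is also a martingale with identical bounded-difference constants $c_k$, applying the one-sided bound to $-Z$ and taking a union bound delivers the factor of $2$ in the statement. For the one-sided bound itself, I would fix any $s > 0$ and apply Markov's inequality to the non-negative random variable $e^{s(Z_t - Z_0)}$:
\begin{equation*}
  \Pr[Z_t - Z_0 \geq \lambda] = \Pr\bigl[e^{s(Z_t - Z_0)} \geq e^{s\lambda}\bigr] \leq e^{-s\lambda}\, \mathbb{E}\bigl[e^{s(Z_t - Z_0)}\bigr].
\end{equation*}

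The remaining task is to bound the moment generating function $\mathbb{E}[e^{s(Z_t - Z_0)}]$ using the martingale and bounded-difference hypotheses. Writing $Z_t - Z_0 = \sum_{k=1}^{t}(Z_k - Z_{k-1})$ and conditioning on the natural filtration $\mathcal{F}_{k-1} = \sigma(Z_0,\ldots,Z_{k-1})$, the tower property peels off one factor at a time:
\begin{equation*}
  \mathbb{E}\bigl[e^{s(Z_t - Z_0)}\bigr] = \mathbb{E}\!\left[ e^{s(Z_{t-1} - Z_0)} \cdot \mathbb{E}\bigl[ e^{s(Z_t - Z_{t-1})} \mid \mathcal{F}_{t-1}\bigr]\right].
\end{equation*}
The crucial input is Hoeffding's lemma: for any random variable $X$ with $\mathbb{E}[X]=0$ and $|X| \leq c$ almost surely, $\mathbb{E}[e^{sX}] \leq e^{s^2 c^2/2}$. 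I would prove this by using convexity of $u \mapsto e^{su}$ on $[-c,c]$ to bound $e^{sX}$ by its affine interpolant through the endpoints, taking expectations (exploiting $\mathbb{E}[X]=0$) to get $\mathbb{E}[e^{sX}] \leq \cosh(sc)$, then comparing the Taylor series of $\cosh(sc)$ and $e^{s^2c^2/2}$ termwise via $(2k)! \geq 2^k k!$. Applied conditionally on $\mathcal{F}_{k-1}$, the martingale property gives $\mathbb{E}[Z_k - Z_{k-1} \mid \mathcal{F}_{k-1}] = 0$ and the hypothesis gives $|Z_k - Z_{k-1}| \leq c_{k-1}$, yielding $\mathbb{E}[e^{s(Z_k - Z_{k-1})} \mid \mathcal{F}_{k-1}] \leq e^{s^2 c_{k-1}^2/2}$.

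Iterating the tower-property peeling $t$ times gives $\mathbb{E}[e^{s(Z_t - Z_0)}] \leq \exp(s^2 C/2)$, and so $\Pr[Z_t - Z_0 \geq \lambda] \leq \exp(-s\lambda + s^2 C/2)$. Optimising over $s > 0$ by setting $s = \lambda/C$ minimises the exponent and yields the one-sided bound $\exp(-\lambda^2/(2C))$; combining with the symmetric argument on $-Z$ completes the lemma. The main obstacle is Hoeffding's lemma itself: lifting the Chernoff machinery from independent sums to martingales is routine via the tower property once a conditional sub-Gaussian MGF estimate is available, but obtaining the sharp constant $1/2$ in the exponent for a bounded centred random variable is where the convexity-plus-$\cosh$-expansion argument outlined above is essential.
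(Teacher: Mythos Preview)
Your proof is correct and follows the standard Chernoff--moment-method route to Azuma--Hoeffding: Markov on $e^{s(Z_t-Z_0)}$, the tower property to peel off increments, Hoeffding's lemma for the conditional MGF bound, and optimisation in $s$. The paper, however, does not prove this lemma at all; it simply cites it from Mitzenmacher and Upfal's textbook (Theorem~12.4) and uses it as a black box. So there is nothing to compare at the level of proof technique---you have supplied a complete argument where the paper defers to a reference.
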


\mainhighp*

\begin{proof}
Let $X_1, X_2, \dots, X_s$ be the sequence of random variables representing the edges selected by $\textsc{Greedy}_d(\pi_H)$ with the source of randomness from the choice of $A'$. Define $Y := |\textsc{Greedy}_d(\pi_H)|$. Then, we define the random variables $Z_i := \mathbb{E}[Y | X_1, ..., X_i]$ for all $i = 0, \dots, s$ to be the corresponding Doob Martingale, and let $Z_i = Z_{i-1}$, for every $i > s$. Notice that $Z_s = Y$ and $ Z_0 = \mathbb{E}[Y] \geq \frac{d }{d + p} \cdot p \cdot \mu(G)$ by Lemma~\ref{lem:main_expec}. Now, we will show that any deviation of $Y$ from its expectation, $|Z_s - Z_0|$, is small with high probability.

To that end, we first need to bound $|Z_{i+1} - Z_i|$ for all $i \geq 0$. Notice that $|Z_{i+1} - Z_i| = 0$ for all $i \geq s$. Next, we will argue that $|Z_{i+1} - Z_i| \leq 1$ for all $i < s$. Indeed, for any fixed first $i$ edges added to the semi-matching, any two different choices for $X_{i+1}$ yield two potentially different semi-matchings $S_1, S_2$, respectively, such that $S_1 \oplus S_2$ consists of at most one alternating path. Hence, the two semi-matchings differ by at most one edge, which proves the claim.

Then, we have that $s = Y \leq d \cdot \mu(H) \leq d \cdot \mu(G)$ and it follows that $|Z_{i+1} - Z_i| \leq 1$ for all $i \leq d \cdot \mu(G)$ and $|Z_{i+1} - Z_i| = 0$ for all $i > d \cdot \mu(G)$. Finally, by applying Azuma-Hoeffding's Inequality (see Lemma~\ref{lem:Azuma-Hoeffding}), we finalise the proof:
\begin{align*}
\Pr \left[|Z_s - Z_0| \geq 6\sqrt{d\mu(G) \ln \mu(G)} \right] \leq 2\mu(G)^{-18} \ , 
\end{align*}
where $|Z_s - Z_0| = |Y - \mathbb{E}[Y]|$.
\end{proof}
\end{document}